\newtheorem{theorem}{Theorem}
\newtheorem{lemma}{Lemma}
\newtheorem{proposition}{Proposition}
\theoremstyle{definition}
\newtheorem{remark}{Remark}
\newtheorem{definition}{Definition}
\newcommand{\mc}[1]{\mathcal{#1}}
\newcommand{\ms}[1]{\mathsf{#1}}
\newcommand{\mf}[1]{\mathfrak{#1}}
\newcommand{\mb}[1]{\mathbb{#1}}
\newcommand{\N}{\mathbb N}
\newcommand{\R}{\mathbb R}
\newcommand{\C}{\mathbb C}
\newcommand{\T}{\mathbb T}
\newcommand{\hil}{\mathcal{H}} 
\newcommand{\tr}[1]{\mathrm{tr}\left[#1\right]} 
\def\<{\langle}
\def\>{\rangle}
\newcommand{\id}{\mathbbm{1}} 
\newcommand{\fii}{\varphi}
\newcommand{\eps}{\varepsilon}
\newcommand{\tj}{\vartheta}
\newcommand{\ovl}{\overline}
\newcommand{\vN}{\otimes_{\rm vN}}
\newcommand{\alg}{\otimes_{\rm alg}}
\newcommand{\tmin}{\otimes_{\rm min}}
\newcommand{\tmax}{\otimes_{\rm max}}
\begin{document}

\title{The Choi-Jamio\l kowski isomorphism\\
and covariant quantum channels}

\author{Erkka Haapasalo}
\address{Department of Physics and Center for Field Theory and Particle Physics, Fudan University, Shanghai 200433, China}
\email{erkka\_haapasalo@fudan.edu.cn}

\begin{abstract}
A generalization of the Choi-Jamio\l kowski isomorphism for completely positive maps between operator algebras is introduced. Particular emphasis is placed on the case of normal unital completely positive maps defined between von Neumann algebras. This generalization is applied especially to the study of maps which are covariant under actions of a symmetry group. We highlight with the example of, e.g.,\ phase-shift-covariant quantum channels the ease of this method in particular in the case of a compact symmetry group. We also discuss the case of channels which are covariant under actions of the Euclidean group of rigid motions in 3 dimensions.
\end{abstract}

\maketitle

\section{Introduction}

The (Choi-)Jamio\l kowski isomorphism \cite{Jamiolkowski} is an established and simple method in studying completely positive trace-preserving maps between finite quantum systems and it is a standard part of the quantum information researchers' tool kit. It simply identifies rank-1 operators $|n\>\<m|$ in a given orthonormal basis of a Hilbert space $\hil$ with vectors $|m,n\>:=|m\>\otimes|n\>$ of $\hil\otimes\hil$ and, thus, identifies state transformations with operators of the tensor product Hilbert space. This method has been earlier adapted for quantum channels between possibly infinite dimensional quantum systems \cite{Holevo2011,KiBuUoPe2017}, but we suggest a generalization of this which is applicable to a wider range of input and output operator algebras. Our main results deal with normal unital completely positive maps between injective von Neumann algebras where the Heisenberg-output algebra is, additionally, $\sigma$-finite.

In this treatise, our motivation for this generalization is the new methodology it provides for the study of covariant channels. The use of dilation techniques has earlier been used successfully to study covariant maps in infinite-dimensional cases \cite{HaPe2017}, but characterization of covariant quantum channels remains, in practice, rather difficult. The methods introduced here are a suggestion how to overcome these problems in many infinite-dimensional situations. We will see that the choice of the faithful state generating the generalized Choi-Jamio\l kowski isomorphism is crucial in simplifying the determination of covariant channels. Situation is particularly simple for compact symmetry groups.

This paper is arranged as follows: In Section \ref{sec:prel}, we present the basic definitions regarding operator algebras and channels between them and, in Section \ref{sec:CJ}, we put them to work and define the generalized Choi-Jamio\l kowski isomorphism (Definition \ref{def:ChoiJamiolkowski}). The prototypical case of channels between type-I factors (quantum-to-quantum channels) is studied in Subsection \ref{subsec:typeI} where a connection between particular Kraus decompositions of channels and spectral decompositions of their Choi states is established. Proposition \ref{prop:transposed} in Subsection \ref{subsec:transpose} shows that the Choi-Jamio\l kowski state of a channel is also the Choi-Jamio\l kowski state of a particular transposed channel up to swapping the places of the input and output algebras. We go on to covariance questions in Section \ref{sec:covCJ} and characterize the Choi-Jamio\l kowski states of covariant channels in Theorem \ref{theor:covCJ}. In Section \ref{sec:invariant} we discuss some examples where a faithful state $\rho_0$ on the Heisenberg-output algebra can be chosen so that $\rho_0$ is invariant under the symmetry action. This makes the characterization of covariant channels particularly simple as is exemplified by Subsection \ref{subsec:modulargroup} dealing with channels covariant with respect to modular automorphism groups and Subsection \ref{subsec:phaseshift} discussing quantum channels which are covariant under phase shifts. As a further example, we investigate channels which are covariant with respect to rigid motions in $\R^3$.

\section{Basic definitions and mathematical preliminaries}\label{sec:prel}

This section fixes some basic notations used throughout this work and also introduces mathematical concepts and results paving the way for the establishment of the Choi-Jamio\l kowski isomorphism in the following section. Recall that a $C^*$-algebra $\mf M$ is a von Neumann algebra if it is the topological dual of a Banach space known as the {\it pre-dual}. The pre-dual is unique up to homeomorphism and is denoted $\mf M_*$. We may, equivalently, define a von Neumann algebra as a sub-$C^*$-algebra of $\mc L(\mf H)$ coinciding with its double commutant, where $\mc L(\mf H)$ stands for the algebra of bounded operators on a Hilbert space $\mf H$. We denote the unit of $\mf M$ by $1_{\mf M}$ and the unit of $\mc L(\mf H)$ by $\id_{\mf H}$. The elements of the pre-dual of a von Neumann algebra $\mf M$ are viewed as functionals on $\mf M$ through $\rho(a)=\<\rho,a\>$, $\rho\in\mf M_*$, $a\in\mf M$.

Let $\mc A$ and $\mc B$ be $C^*$-algebras. Pick $n\in\N$. We say that a linear map $\Phi:\mc B\to\mc A$ is {\it $n$-positive} if, for every $a_1,\ldots,\,a_n\in\mc A$ and all $b_1,\ldots,\,b_n\in\mc B$,
$$
\sum_{i,j=1}^n a_i^*\Phi(b_i^*b_j)a_j\geq0.
$$
This $\Phi$ is {\it completely positive} if it is $n$-positive for all $n\in\N$. If $\mc A$ is Abelian, complete positivity reduces to positivity (i.e.,\ 1-positivity). We define complete positivity similarly for general *-algebras $\mc A$ and $\mc B$. If $\mc A$ and $\mc B$ possess units $1_{\mc A}$ and, respectively $1_{\mc B}$, a map $\Phi:\mc B\to\mc A$ is {\it unital} if $\Phi(1_{\mc B})=1_{\mc A}$.
\begin{definition}
Whenever $\mc A$ and $\mc B$ are unital $C^*$-algebras, we call unital completely positive linear maps $\Phi:\mc B\to\mc A$ as {\it channels} and denote the set of these channels by ${\bf CH}(\mc A,\mc B)$. The set of positive linear functionals on $\mc B$ is denoted by ${\bf S}(\mc B)$. If $\mc B$ is unital, we denote ${\bf S}_1(\mc B):={\bf CH}(\C,\mc B)$ and call these functionals as {\it states}.
\end{definition}

Suppose that $\mf M$ and $\mf N$ are von Neumann algebras. A positive map $\Phi:\mf N\to\mf M$ is continuous with respect to the ultraweak topologies of $\mf N$ and $\mf M$ if and only if it is {\it normal}: $\Phi\big(\sup_\lambda b_\lambda\big)=\sup_\lambda\Phi(b_\lambda)$ for any increasing net $(b_\lambda)_\lambda\subset\mf N$ of self-adjoint elements.
\begin{definition}
Whenever $\mf M$ and $\mf N$ are von Neumann algebras, we denote by ${\bf NCH}(\mf M,\mf N)$ the set of normal unital completely positive linear maps $\Phi:\mf N\to\mf M$. We call the maps $\Phi\in{\bf NCH}(\mf M,\mf N)$ as {\it normal channels}. We denote the set of normal positive unital functionals $\rho:\mf N\to\C$ by ${\bf NS}_1(\mf N)$. These states are called as {\it normal states}.
\end{definition}

Naturally, $\mf N_*$ coincides with the vector space spanned by normal states. The continuity properties of a channel $\Phi\in{\bf NCH}(\mf M,\mf N)$ guarantee the existence of the {\it pre-dual map} $\Phi_*:\mf M_*\to\mf N_*$, $\Phi_*(\rho)=\rho\circ\Phi$, $\rho\in\mf M_*$. This allows us to define normal channels equivalently as bounded linear maps $\Lambda:\mf M_*\to\mf N_*$ such that $\Lambda\big({\bf NS}_1(\mf M)\big)\subseteq{\bf NS}_1(\mf N)$ and the dual $\Lambda^*:\mf N\to\mf M$ defined through $\rho\circ\Lambda^*=\Lambda(\rho)$, $\rho\in\mf M_*$, is completely positive.

Recall that, for any completely positive linear map $\Phi:\mc A\to\mc L(\mf H)$ where $\mc A$ is a $C^*$-algebra and $\mf H$ is a Hilbert space, there exists a {\it Stinespring dilation}, i.e.,\ a triple $(\mf K,\pi,J)$ consisting of a Hilbert space $\mf K$, a *-representation $\pi:\mc A\to\mc L(\mf K)$, and a linear map $J:\mf H\to\mf K$ such that $\Phi(a)=J^*\pi(a)J$, $a\in\mc A$. Amongst these dilations there exists a minimal $(\mf K_0,\pi_0,J_0)$ where the vectors $\pi(a)J\fii$, $a\in\mc A$, $\fii\in\mf H$, span a dense subspace of $\mf K$. The minimal dilation is unique up to unitary equivalence. If $\mc A=\mf N$ is a von Neumann algebra, the above $\Phi$ is normal if and only if $\pi$ is a normal *-representation. The notion of the Stinespring dilation can be extended to the case where we replace the (Heisenberg) output algebra $\mc L(\mf H)$ with a more general $C^*$-algebra, but that involves the use of $C^*$-modules which is irrelevant for our scope.

From the above we obtain the GNS-constructions of states: A pair $(\mf H,\Omega)$ consisting of a Hilbert space $\mf H$ and a unit vector $\Omega\in\mf H$ is a {\it GNS-construction} for a state $\rho\in\mc S(\mf M)$ if $\mf M\subseteq\mc L(\mf H)$ and $\rho(a)=\<\Omega|a\Omega\>$ for all $a\in\mf M$. Such a GNS construction exists for any state and among them there is a minimal one for which $\Omega$ is cyclic for $(\mf M,\mf H)$, i.e.,\ $a\Omega$, $a\in\mf M$, span a dense subspace of $\mf H$.

For two von Neumann algebras $\mf M$ and $\mf N$ operating on the respective Hilbert spaces $\mf H$ and $\mf K$ we can define the {\it von Neumann tensor product} $\mf M\vN\mf N$ as the double commutant of the algebraic tensor product $\mf M\alg\mf N$ in $\mc L(\mf H\otimes\mf K)$. However, we need some more generalized tensor products in this treatise: Suppose that $\mc A$ and $\mc B$ are $C^*$-algebras. A {\it cross norm for $\mc A$ and $\mc B$} is a norm $\|\cdot\|:\mc A\alg\mc B\to[0,\infty)$ such that $\|cd\|\leq\|c\|\|d\|$ and $\|c^*c\|=\|c\|^2$ for all $c,\,d\in\mc A\alg\mc B$. The completion of $\mc A\alg\mc B$ with respect to any cross norm is a $C^*$-algebra. There are the {\it minimal} and, respectively, {\it maximal cross norms} $\|\cdot\|_{\rm min}$ and, respectively, $\|\cdot\|_{\rm max}$ defined by
\begin{eqnarray*}
\bigg\|\sum_{j=1}^na_j\otimes b_j\bigg\|_{\rm min}&=&\sup_{\pi_1\in{\rm Repr}\,\mc A,\,\pi_2\in{\rm Repr}\,\mc B}\bigg\|\sum_{j=1}^n\pi_1(a_j)\otimes\pi_2(b_j)\bigg\|,\\
\bigg\|\sum_{j=1}^na_j\otimes b_j\bigg\|_{\rm max}&=&\sup_{\pi\in{\rm Repr}\,\mc A\alg\mc B}\bigg\|\pi\bigg(\sum_{j=1}^na_j\otimes b_j\bigg)\bigg\|
\end{eqnarray*}
for all $n\in\N$, $a_1,\ldots,\,a_n\in\mc A$, and $b_1,\ldots,\,b_n\in\mc B$, where ${\rm Repr}\,\mc C$ for a *-algebra $\mc C$ stands for the class of *-representations of $\mc C$. For any cross norm $\|\cdot\|$ and any $c\in\mc A\alg\mc B$, $\|c\|_{\rm min}\leq\|c\|\leq\|c\|_{\rm max}$. The $\|\cdot\|_{\rm min}$-completion of $\mc A\alg\mc B$ is denoted $\mc A\tmin\mc B$ and the $\|\cdot\|_{\rm max}$-completion is denoted $\mc A\tmax\mc B$. For von Neumann algebras $\mf M\subseteq\mc L(\mf H)$ and $\mf N\subseteq\mc L(\mf K)$ the minimal cross norm $\|\cdot\|_{\rm min}$ coincides with the operator norm of $\mc L(\mf H\otimes\mf K)$ restricted on $\mf M\alg\mf N$.

A von Neumann algebra $\mf M$ is {\it injective} if, for every unital $C^*$-algebra $\mc A$ and any selfadjoined closed subspace $\mc V$ of $\mc A$ containing the unit of $\mc A$, and for any completely positive linear map $\Psi:\mc V\to\mf M$, there is a completely positive linear map $\ovl{\Psi}:\mc A\to\mf M$ such that $\ovl{\Psi}|_{\mc V}=\Psi$. According to \cite[Chapter XV, Theorem 3.1]{TakesakiIII} $\mf M\subseteq\mc L(\mf H)$ is injective if and only if either of the following equivalent properties holds:
\begin{itemize}
\item[(I1)] Denote the commutant of $\mf M$ within $\mc L(\mf H)$ by $\mf M'$, i.e.,\ $\mf M'=\{a'\in\mc L(\mf H)\,|\,a'a=aa'\ {\rm for\ all}\ a\in\mf M\}$. For any $n\in\N$, $a_1,\ldots,\,a_n\in\mf M$, and $a'_1,\ldots,\,a'_n\in\mf M'$,
$$
\bigg\|\sum_{i=1}^na_ia'_i\bigg\|\leq\bigg\|\sum_{i=1}^na_i\otimes a'_i\bigg\|_{\rm min}.
$$
\item[(I2)] There are nets $(S_\lambda)_{\lambda\in L}$ and $(T_\lambda)_{\lambda\in L}$ of completely positive linear contractions $S_\lambda:\mf M\to\mc M_{n_\lambda}(\C)$, $T_\lambda:\mc M_{n_\lambda}(\C)\to\mf M$, where $(n_\lambda)_{\lambda\in L}$ is a net of natural numbers, $\mc M_n(\C)$ stands for the algebra of $n\times n$-matrices with complex entries for any $n\in\N$, and $S_\lambda$ is normal for any $\lambda\in L$ such that $T_\lambda\circ S_\lambda\overset{\lambda\in L}{\rightarrow}{\rm id}_{\mf M}$ in the point-ultrastrong topology.
\end{itemize}

\section{A generalization of the Choi-Jamio\l kowski isomorphism}\label{sec:CJ}

We now embark on defining a generalization for the traditional Choi-Jamio\l kowski isomorphism. We first make a necessary definition.

\begin{definition}\label{def:binormal}
\begin{itemize}
\item[(i)] Assume that $\mc A$ and $\mc B$ are unital $C^*$-algebras. Let $\mc A\otimes_x\mc B$ be any $C^*$-tensor product of $\mc A$ and $\mc B$, i.e., $x$ is anything between ${\rm min}$ and ${\rm max}$, or the von Neumann tensor product if $\mc A$ and $\mc B$ are von Neumann algebras. For a (normal) state $S\in{\bf S}(\mc A\otimes_x\mc B)$ (or $S\in{\bf NS}(\mc A\vN\mc B)$ in the von Neumann algebra case), we denote by $S_{(1)}\in{\bf S}(\mc A)$ (or $S\in{\bf NS}(\mc A)$) the state $\mc A\ni a\mapsto S(a\otimes 1_{\mc B})\in\C$ and by $S_{(2)}\in{\bf S}(\mc B)$ (or $S\in{\bf NS}(\mc B)$) the state $\mc B\ni b\mapsto S(1_{\mc A}\otimes b)\in\C$. We call $S_{(1)}$ and $S_{(2)}$ as the {\it margins of $S$}.
\item[(ii)] Suppose that $\mf M$ and $\mf N$ are von Neumann algebras and $\mf M\otimes_x\mf N$ is any of the tensor products of item (i) above. We denote by ${\bf S}^{\rm bin}(\mf M\otimes_x\mf N)$ the subset of those $S\in{\bf S}(\mf M\otimes_x\mf N)$ such that $S_{(1)}$ and $S_{(2)}$ are normal. We call these as the {\it binormal states}.
\end{itemize}
\end{definition}

Let us discuss item (ii) above in more detail: Let $S\in{\bf S}^{\rm bin}(\mf M\otimes_x\mf N)$, $(a_\lambda)_{\lambda\in L}$ an increasing net bounded from above and with the supremum $a$, and $b\in\mf N$ a fixed positive element. We have
$$
S\big((a-a_\lambda)\otimes b\big)\leq\|b\|S\big((a-a_\lambda)\otimes 1_{\mf N}\big)=\|b\|S_{(1)}(a-a_\lambda)\overset{\lambda\in L}{\searrow}0,
$$
implying that $S(a\otimes b)=\sup_{\lambda\in L}S(a_\lambda\otimes b)$; we have used the well known fact that $b\leq\|b\|1_{\mf N}$. It follows that $S(\cdot\otimes b)$ is normal for all positive $b\in\mf N$. Similarly, $S(a\otimes\cdot)$ is normal for all positive $a\in\mf M$. By expressing elements of von Neumann algebras as linear combinations of four positive elements, we have that $S\in{\bf S}(\mf M\otimes_x\mf N)$ is binormal if and only if $S(\cdot\otimes b)$ and $S(a\otimes\cdot)$ are ultraweakly continuous for all $a\in\mf M$ and $b\in\mf N$.

Before we can formulate the Choi-Jamio\l kowski isomorphism, we need to discuss some basics of modular theory. To this end, in turn, we have to introduce some further concepts dealing with faithful states. A von Neumann algebra $\mf M$ is {\it $\sigma$-finite} if any set of mutually orthogonal projections of $\mf M$ is countable. This $\sigma$-finiteness is equivalent with the existence of a faithful state $\rho_0\in\mc S(\mf M)$, i.e.,\ $\rho_0(a)=0$ for $a\in\mf M$, $a\geq0$, implies $a=0$. From now on, we assume that $\mf M$ is a $\sigma$-finite von Neumann algebra and $\mf N$ is another von Neumann algebra (not necessarily $\sigma$-finite) and we fix a faithful state $\rho_0\in\mc S(\mf M)$. We fix a GNS-construction $(\mf H,\Omega)$ for $\rho_0$ where $\Omega\in\mf H$ is cyclic and separating vector for $(\mf H,\mf M)$; the latter condition means that the map $\mf M\ni a\mapsto a\Omega\in\mf H$ is injective and mirrors the fact that $\rho_0$ is faithful. Whenever $\mf M$ is $\sigma$-finite and $\rho_0\in{\bf NS}_1(\mf M)$ is faithful, a GNS construction $(\mf H,\Omega)$ for $\rho_0$ can always be found where $\Omega$ is cyclic and separating for $(\mf H,\mf M)$. We again denote the commutant of $\mf M$ in $\mc L(\mf H)$ by $\mf M'$. It can be shown that $\Omega$ is cyclic for $(\mf H,\mf M)$ if and only if it is separating for $(\mf H,\mf M')$ and $\Omega$ is separating for $(\mf H,\mf M)$ if and only if it is cyclic for $(\mf H,\mf M')$. Hence, the situation is completely symmetric for $\mf M$ and $\mf M'$, and we may define the faithful state $\rho'_0\in{\bf NS}_1(\mf M')$, $\rho'_0(a')=\<\Omega|a'\Omega\>$, $a'\in\mf M'$.

We may define the closable antilinear operator $S$ defined densely by $Sa\Omega=a^*\Omega$, $a\in\mf M$. We may give $S$ the polar decomposition $S=J\Delta^{1/2}$, where $\Delta:=S^*S$ is a strictly positive operator called as the {\it modular operator} and $J=J^*=J^{-1}$ is an antilinear isometry called as the {\it modular conjugation}. The Tomita-Takesaki modular theorem \cite[Chapter VI, Theorem 1.19]{TakesakiII} states that
$$
J\mf MJ=\mf M',\quad \Delta^{it}\mf M\Delta^{-it}=\mf M,\quad\Delta^{it}\mf M'\Delta^{-it}=\mf M',\qquad t\in\R.
$$
We denote $j(C):=JCJ$ for all $C\in\mc L(\mf H)$. We have that $\rho'_0(a')=\ovl{(\rho_0\circ j)(a')}$ for all $a'\in\mf M'$.

The following theorem serves to establish the theoretical basis of the Choi-Jamio\l kowski isomorphism.

\begin{theorem}\label{theor:CJisomorphism}
Suppose that $\mf M$ is an injective $\sigma$-finite von Neumann algebra and retain the notations above. Also assume that $\mc B$ is a unital $C^*$-algebra. Define ${\bf S}_{\rho_0}(\mf M'\tmin\mc B)$ as the set of those $S\in{\bf S}_1(\mf M'\tmin\mc B)$ such that $S_{(1)}=\rho'_0$. For each $\Phi\in{\bf CH}(\mf M,\mc B)$, define $S_\Omega^\Phi:\mf M'\alg\mc B\to\C$, $S_\Omega^\Phi(a'\otimes b)=\<\Omega|a'\Phi(b)\Omega\>$, $a'\in\mf M'$, $b\in\mc B$. The map $S_\Omega^\Phi$ extends into a state $S_\Omega^\Phi\in{\bf S}_{\rho_0}(\mf M'\tmin\mc B)$ and the map
$$
{\bf CH}(\mf M,\mc B)\ni\Phi\mapsto S_\Omega^\Phi\in{\bf S}_{\rho_0}(\mf M'\tmin\mc B)
$$
is bijective. If, additionally, $\mc B=\mf N$ is a von Neumann algebra as well and we define ${\bf S}^{\rm bin}_{\rho_0}(\mf M'\tmin\mf N)$ as the set of those binormal states $S$ on $\mf M'\tmin\mf N$ such that $S_{(1)}=\rho'_0$, the above construction gives a bijection
$$
{\bf NCH}(\mf M,\mf N)\ni\Phi\mapsto S_\Omega^\Phi\in{\bf S}^{\rm bin}_{\rho_0}(\mf M'\tmin\mf N).
$$
\end{theorem}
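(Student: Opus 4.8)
The plan is to prove the general $C^*$-algebra claim first and then read off the von Neumann refinement, which is the real target. To see that each $S_\Omega^\Phi$ is a state with $S_{(1)}=\rho'_0$, I would exhibit it as a composition of completely positive maps. Since $\mf M$ and $\mf M'$ commute elementwise, the product $m:\mf M'\alg\mf M\to\mc L(\mf H)$, $m(a'\otimes a)=a'a$, is a unital $*$-homomorphism. This is exactly where injectivity is indispensable: property (I1) reads $\big\|\sum_i a_ia'_i\big\|\le\big\|\sum_i a_i\otimes a'_i\big\|_{\rm min}$, i.e.\ $m$ is $\|\cdot\|_{\rm min}$-bounded, so it extends to a unital $*$-homomorphism $\mu:\mf M'\tmin\mf M\to\mc L(\mf H)$. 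As $\Phi$ is completely positive, ${\rm id}_{\mf M'}\otimes\Phi:\mf M'\tmin\mc B\to\mf M'\tmin\mf M$ is completely positive (a standard feature of the minimal tensor product), and $\omega:=\<\Omega|\,\cdot\,\Omega\>$ is a state on $\mc L(\mf H)$. On $\mf M'\alg\mc B$ one checks $S_\Omega^\Phi=\omega\circ\mu\circ({\rm id}_{\mf M'}\otimes\Phi)$, so $S_\Omega^\Phi$ extends to a state on $\mf M'\tmin\mc B$ with first margin $S_\Omega^\Phi(a'\otimes 1_{\mc B})=\<\Omega|a'\Omega\>=\rho'_0(a')$. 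This positivity on the \emph{minimal} tensor product is the crux of the whole argument and, I expect, the main obstacle, since it is precisely what forces the injectivity hypothesis on $\mf M$ through (I1).

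For bijectivity in the $C^*$ case I use that $\Omega$ is cyclic and separating for both $\mf M$ and $\mf M'$. Injectivity of $\Phi\mapsto S_\Omega^\Phi$: if $S_\Omega^\Phi=S_\Omega^\Psi$ then $\<(a')^*\Omega|\Phi(b)\Omega\>=\<(a')^*\Omega|\Psi(b)\Omega\>$ for all $a'\in\mf M'$, and since $\mf M'\Omega$ is dense this gives $\Phi(b)\Omega=\Psi(b)\Omega$, whence $\Phi(b)=\Psi(b)$ by separation of $\Omega$ for $\mf M$. For surjectivity, given $S\in{\bf S}_{\rho_0}(\mf M'\tmin\mc B)$, the estimate $S\big((a'-a'_\lambda)\otimes b\big)\le\|b\|\rho'_0(a'-a'_\lambda)$ from the discussion after Definition \ref{def:binormal} shows that $a'\mapsto S(a'\otimes b)$ is normal and dominated by $\|b\|\rho'_0$ for every $b\ge0$. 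A Radon--Nikodym argument then yields $\Phi(b)\in\mf M$: the form $(a'\Omega,c'\Omega)\mapsto S\big((a')^*c'\otimes b\big)$ is bounded by $\|b\|\,\|a'\Omega\|\,\|c'\Omega\|$ (Cauchy--Schwarz for the positive functional $S(\cdot\otimes b)$ together with $S_{(1)}=\rho'_0$), hence is represented by an operator that commutes with $\mf M'$ and so lies in $(\mf M')'=\mf M$; this operator is $\Phi(b)$, and $S=S_\Omega^\Phi$ follows. Extending $\Phi$ linearly gives a unital map (because $S_{(1)}=\rho'_0$) whose complete positivity is obtained by testing the matrix $[\Phi(b_i^*b_j)]$ on the dense family $a'_i\Omega$ and commuting $\Phi(b_i^*b_j)\in\mf M$ past $\mf M'$:
$$
\sum_{i,j}\<a'_i\Omega|\Phi(b_i^*b_j)a'_j\Omega\>=\sum_{i,j}S\big((a'_i)^*a'_j\otimes b_i^*b_j\big)=S\Big(\big(\sum_i a'_i\otimes b_i\big)^*\big(\sum_j a'_j\otimes b_j\big)\Big)\ge0 .
$$

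Finally, with $\mc B=\mf N$ a von Neumann algebra, it remains to show that under this bijection $\Phi$ is normal precisely when $S_\Omega^\Phi$ is binormal. Because $S_{(1)}=\rho'_0$ is automatically normal, the remark after Definition \ref{def:binormal} reduces binormality of $S_\Omega^\Phi$ to ultraweak continuity of $b\mapsto\<(a')^*\Omega|\Phi(b)\Omega\>$ for every $a'\in\mf M'$. If $\Phi$ is normal this is clear, being the composition of $\Phi$ with the normal functional $c\mapsto\<(a')^*\Omega|c\Omega\>$ on $\mf M$. Conversely, assume each such map is normal and take a bounded increasing net $b_\lambda\nearrow b$ in $\mf N$; for all $a',c'\in\mf M'$, commuting $\Phi(b_\lambda)\in\mf M$ past $c'$ gives $\<c'\Omega|\Phi(b_\lambda)a'\Omega\>=S_\Omega^\Phi\big((c')^*a'\otimes b_\lambda\big)\to\<c'\Omega|\Phi(b)a'\Omega\>$, and uniform boundedness $\|\Phi(b_\lambda)\|\le\|b\|$ together with the density of $\mf M'\Omega$ promote this to weak-operator convergence $\Phi(b_\lambda)\to\Phi(b)$, so $\sup_\lambda\Phi(b_\lambda)=\Phi(b)$ and $\Phi$ is normal. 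Hence the bijection of the first part restricts to the asserted bijection ${\bf NCH}(\mf M,\mf N)\to{\bf S}^{\rm bin}_{\rho_0}(\mf M'\tmin\mf N)$, the reconstruction direction being essentially the observation that the Radon--Nikodym operator lands inside $(\mf M')'=\mf M$.
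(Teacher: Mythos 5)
Your proposal is correct and follows essentially the same route as the paper: injectivity via (I1) to extend the multiplication map to $\mf M'\tmin\mf M$, composition with ${\rm id}_{\mf M'}\tmin\Phi$ and the vector state for the forward direction, cyclicity/separation of $\Omega$ for injectivity and for the complete-positivity and normality checks. The only cosmetic difference is in the surjectivity step, where you recover $\Phi(b)$ from a bounded positive sesquilinear form on $\mf M'\Omega$ rather than from the paper's GNS construction of $S(\cdot\otimes b)$ and the operator $D_b^*D_b$; the two are interchangeable.
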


\begin{proof}
Let $\mf M$ be an injective $\sigma$-finite von Neumann algebra and $\mc B$ a unital $C^*$-algebra as in the claim. According to item (I1) in the definition of injectivity, the map
$$
\mf M'\alg\mf M\ni\sum_{j=1}^na'_i\otimes a_i\mapsto\sum_{i=1}^na'_ia_i\in\mc L(\mf H)
$$
extends into a (unique) continuous homomorphism defined on $\mf M'\tmin\mf M$. We denote this homomorphism by $\tj$. Moreover, for any $\Phi\in{\bf CH}(\mf M,\mc B)$, the map ${\rm id}_{\mf M'}\alg\Phi:\mf M'\alg\mc B\to\mf M'\alg\mf M$, $a'\otimes b\mapsto a'\otimes\Phi(b)$, extends into a unique completely positive unital map ${\rm id}_{\mf M'}\tmin\Phi:\mf M'\tmin\mc B\to\mf M'\tmin\mf M$ \cite[Chapter IV, Proposition 4.23]{TakesakiI}. Thus we may define a completely positive unital map $\tj\circ({\rm id}_{\mf M'}\tmin\Phi):\mf M'\tmin\mc B\to\mc L(\mf H)$ such that $[\tj\circ({\rm id}_{\mf M'}\tmin\Phi)](a'\otimes b)=a'\Phi(b)$, $a'\in\mf M'$, $b\in\mc B$. Now, for any $\Phi\in{\bf CH}(\mf M',\mc B)$, $\mf M'\times\mc B\ni(a',b)\mapsto a'\Phi(b)\in\mc L(\mf H)$ extends into a unique channel within ${\bf CH}\big(\mc L(\mf H),\mf M'\tmin\mc B\big)$. Thus, $S_\Omega^\Phi\in{\bf S}(\mf M'\tmin\mc B)$ is well defined. It is immediately seen that, in fact, $S_\Omega^\Phi\in{\bf S}_{\rho_0}(\mf M'\tmin\mf N)$. If $\Phi,\,\Phi'\in{\bf CH}(\mf M,\mc B)$ and $\Phi\neq\Phi'$, the fact that $\Omega$ is cyclic and separating for $(\mf M,\mf H)$ easily implies that $S_\Omega^\Phi\neq S_\Omega^{\Phi'}$.

Suppose now that $S\in\mc S_{\rho_0}(\mf M'\tmin\mc B)$. Pick $b\in\mc B$, $b\geq0$, and define the positive map $S_b:\mf M'\to\C$ ($S_b\in\mf M'_*$), $S_b(a')=S(a'\otimes b)$, $a'\in\mf M'$. Let $(\mf K,\Psi)$ be a GNS-construction for $S_b$, where $\|\Psi\|^2=S(1_{\mf M}\otimes b)$. Pick $a'\in\mf M'$. We may evaluate
$$
\|a'\Psi\|^2=S_b((a')^*a')=S((a')^*a'\otimes b)\leq\|b\|S((a')^*a'\otimes1_{\mf N})=\|b\|\rho_0((a')^*a')=\|b\|\|a'\Omega\|^2
$$
which together with the fact that $\Omega$ is cyclic for $(\mf H,\mf M')$ implies that we may define a bounded operator $D_b:\mf H\to\mf K$ with $\|D_b\|\leq\sqrt{\|b\|}$ such that $D_b a'\Omega=a'\Psi$ for all $a'\in\mf M'$. Define $\Phi(b):=D_b^*D_b\in\mc L(\mf H)$. We find that, for any $a',\,\tilde{a}'\in\mf M'$, $\<\tilde{a}'\Omega|\Phi(b)a'\tilde{a}'\Omega\>=\<\tilde{a}'\Psi|a'\tilde{a}'\Psi\>=\<(a')^*\tilde{a}'\Psi|\tilde{a}'\Psi\>=\<D_b(a')^*\tilde{a}'\Omega|D_b\tilde{a}'\Omega\>=\<\tilde{a}'\Omega|a'\Phi(b)\tilde{a}'\Omega\>$. Thus, using the $(\mf H,\mf M')$-cyclicity of $\Omega$, we find for every positive $b\in\mf N$ a unique $\Phi(b)\in\mf M$ (which is positive) such that
\begin{equation}\label{eq:apu1}
S(a'\otimes b)=\<\Omega|a'\Phi(b)\Omega\>,\qquad a'\in\mf M'.
\end{equation}
Through linear extension, we may uniquely define a linear map $\Phi:\mc B\to\mf M$ such that \eqref{eq:apu1} is satisfied with all $b\in\mc B$. Pick $n\in\N$, $a'_1,\ldots,\,a'_n\in\mf M'$, and $b_1,\ldots,\,b_n\in\mc B$. We now have
$$
\sum_{i,j=1}^n\<a'_i\Omega|\Phi(b_i^*b_j)a'_j\Omega\>=\sum_{i,j=1}^n S\big((a'_i\otimes b_i)^*(a'_j\otimes b_j)\big)\geq0,
$$
implying that $\Phi$ is completely positive. The $(\mf H,\mf M')$-cyclicity of $\Omega$ can also be used to establish that $\Phi$ is unital, and we find that $S=S_\Omega^\Phi$.

Assume now that $\mc B=\mf N$ is a von Neumann algebra. As above, we may define $S_\Omega^\Phi\in{\bf S}_{\rho_0}(\mf M'\tmin\mf N)$ through $S_\Omega^\Phi(a'\otimes b)=\<\Omega|a'\Phi(b)\Omega\>$ for all $a'\in\mf M'$ and $b\in\mf N$. Let us establish the binormality of $S_\Omega^\Phi$: Naturally, $(S_\Omega^\Phi)_{(1)}=\rho'_0$ is normal. Let $(b_\lambda)_{\lambda\in L}$ be an increasing sequence of self-adjoint elements of $\mf N$ bounded from above and with the supremum $b$. Using the normality of $\Phi$, we have
$$
(S_\Omega^\Phi)_{(2)}(b_\lambda)=\<\Omega|\Phi(b_\lambda)\Omega\>\overset{\lambda\in L}{\nearrow}\<\Omega|\Phi(b)\Omega\>=(S_\Omega^\Phi)_{(2)}(b),
$$
implying the normality of $(S_\Omega^\Phi)_{(2)}$.

Suppose now that $S\in{\bf S}^{\rm bin}_{\rho_0}(\mf M'\tmin\mf N)$. As above, we find a unique $\Phi\in{\bf CH}(\mf M,\mf N)$ such that \eqref{eq:apu1} holds for every $b\in\mf N$. Assume that $(b_\lambda)_{\lambda\in L}\subset\mf N$ is an increasing net of self-adjoint elements bounded from above, $b_\lambda\nearrow b\in\mf N$. We have for all $a'\in\mf M'$
$$
\<a'\Omega|\Phi(b_\lambda)a'\Omega\>=S\big((a')^*a'\otimes b_\lambda\big)\nearrow S\big((a')^*a'\otimes b\big)=\<a'\Omega|\Phi(b)a'\Omega\>,
$$
implying that $\sup_\lambda\Phi(b_\lambda)=\Phi(b)$; we have used here the observations presented after Definition \ref{def:binormal}. Thus, $\Phi$ is normal and $\Phi\in{\bf NCH}(\mf M,\mf N)$ and we find that $S=S_\Omega^\Phi$.
\end{proof}

Following the beginning of the above proof and retaining the notations therein, we note that, whenever $\mf M$ is injective, we may define the state $\tilde{\rho_0}\in{\bf S}(\mf M'\tmin\mf M)$ as the extension of $\mf M'\times\mf M\ni(a',a)\mapsto\<\Omega|a'a\Omega\>\in\C$. Consider the special case where the state $\tilde{\rho_0}$ extends into a normal state $\hat{\rho_0}$ on $\mf M'\vN\mf M$. As we will see (and as is well-known) this happens especially when $\mf M$ is a type-I factor. It follows that in this case, for any von Neumann algebra $\mf N$ and any $\Phi\in{\bf NCH}(\mf M,\mf N)$, we may define the {\it normal} state $S_\Omega^\Phi\in{\bf NS}(\mf M'\vN\mf N)$ through $S_\Omega^\Phi(a'\otimes b)=\<\Omega|a'\Phi(b)\Omega\>$ for all $a'\in\mf M'$ and $b\in\mf N$ by combining the normal extension onto $\mf M'\vN\mf N$ of $\mf M'\times\mf N\ni(a',b)\mapsto a'\otimes\Phi(b)\in\mf M'\vN\mf M$ with $\hat{\rho_0}$. Exactly as in the above proof, the resulting map ${\bf NCH}(\mf M,\mf N)\ni\Phi\mapsto S_\Omega^\Phi\in{\bf NS}_{\rho_0}(\mf M'\vN\mf M)$, where ${\bf NS}_{\rho_0}(\mf M'\vN\mf N)$ is the set of those normal states $S$ on $\mf M'\vN\mf N$ such that $S_{(1)}=\rho'_0$, is bijective.

\begin{definition}\label{def:ChoiJamiolkowski}
Let $\mf M$ be an injective $\sigma$-finite von Neumann algebra and $\rho_0\in{\bf NS}_1(\mf M)$ be a faithful state with a GNS-construction $(\mf H,\Omega)$ where $\Omega$ is cyclic and separating for $(\mf H,\mf M)$ and retain the related notations fixed before Theorem \ref{theor:CJisomorphism}. Let $\mc B$ be a unital $C^*$-algebra. The bijection ${\bf CH}(\mf M,\mc B)\ni\Phi\mapsto S_\Omega^\Phi\in{\bf S}_{\rho_0}(\mf M'\tmin\mc B)$,
\begin{equation}\label{eq:ChoiJamiolkowski1}
S_\Omega^\Phi(a'\otimes b)=\<\Omega|a'\Phi(b)\Omega\>,\qquad\Phi\in{\bf CH}(\mf M,\mc B),\quad a'\in\mf M',\quad b\in\mc B,
\end{equation}
is called as the {\it Choi-Jamio\l kowski isomorphism} associated to the faithful state $\rho_0$ and its cyclic and separating GNS-construction $(\mf H,\Omega)$ and $S_\Omega^\Phi$ is called as the {\it Choi-Jamio\l kowski state of $\Phi$} for any $\Phi\in{\bf CH}(\mf M,\mc B)$.

If, in addition to the above assumptions, $\mc B=\mf N$ is a von Neumann algebra, the bijection ${\bf NCH}(\mf M,\mf N)\ni\Phi\mapsto S_\Omega^\Phi\in{\bf S}^{\rm bin}_{\rho_0}(\mf M'\tmin\mf N)$,
\begin{equation}\label{eq:ChoiJamiolkowski2}
S_\Omega^\Phi(a'\otimes b)=\<\Omega|a'\Phi(b)\Omega\>,\qquad\Phi\in{\bf NCH}(\mf M,\mf N),\quad a'\in\mf M',\quad b\in\mf N,
\end{equation}
is called as the {\it binormal Choi-Jamio\l kowski isomorphism} associated to the faithful state $\rho_0$ and its cyclic and separating GNS-construction $(\mf H,\Omega)$ and $S_\Omega^\Phi$ is called as the {\it binormal Choi-Jamio\l kowski state of $\Phi$} for any $\Phi\in{\bf NCH}(\mf M,\mf N)$.

Moreover, if the state $\tilde{\rho_0}$ presented after the proof of Theorem \ref{theor:CJisomorphism} extends into $\hat{\rho_0}\in{\bf NS}(\mf M'\vN\mf M)$, the bijection ${\bf NCH}(\mf M,\mf N)\ni\Phi\mapsto S_\Omega^\Phi\in{\bf NS}_{\rho_0}(\mf M'\vN\mf N)$ set up as in Equation \eqref{eq:ChoiJamiolkowski2} above is called as the {\it normal Choi-Jamio\l kowski isomorphism} associated to the faithful state $\rho_0$ and its cyclic and separating GNS-construction $(\mf H,\Omega)$ and $S_\Omega^\Phi$ is called as the {\it normal Choi-Jamio\l kowski state of $\Phi$} for any $\Phi\in{\bf NCH}(\mf M,\mf N)$.
\end{definition}

In addition to the above cases, one can also define in a sense weaker form of the Choi-Jamio\l kowski isomorphism where one does not have to assume $\mf M$ to be injective but the minimal tensor product has to be replaced with the maximal one. Recall that, for $C^*$-algebras $\mc A$, $\mc B$, and $\mc C$ and completely positive linear maps $\Phi_1:\mc A\to\mc C$ and $\Phi_2:\mc B\to\mc C$ with commuting ranges, one can uniquely define a completely positive map $\Phi:\mc A\tmax\mc B\to\mc C$ such that $\Phi(a\otimes b)=\Phi_1(a)\Phi_2(b)$ for all $a\in\mc A$ and $b\in\mc B$ \cite[Chapter IV, Proposition 4.23]{TakesakiI}. Suppose then that $\mf M$ is a $\sigma$-finite von Neumann algebras with a faithful normal state $\rho_0$ equipped with a GNS-representtion $(\mf H,\Omega)$ where $\Omega$ is cyclic and separating for $(\mf H,\mf M)$ and $\mc B$ is some $C^*$-algebra. For any $\Phi\in{\bf CH}(\mf M,\mc B)$ we may, according to the above define the completely positive extension of $\mf M'\times\mc B\ni(a',b)\mapsto a'\Phi(b)\in\mc L(\mf H)$ defined on $\mf M'\tmax\mc B$. The rest can be done in exactly the same way as in the proof of Theorem \ref{theor:CJisomorphism}, allowing the definition of a Choi-Jamio\l kowski isomorphism between ${\bf CH}(\mf M,\mc B)$ and ${\bf S}_{\rho_0}(\mf M'\tmax\mc B)$ and, similarly in the case where $\mc B=\mf N$ is a von Neumann algebra, the Choi-Jamio\l kowski isomorphism between ${\bf NCH}(\mf M,\mf N)$ and ${\bf S}^{\rm bin}_{\rho_0}(\mf M'\tmax\mf N)$. However, in what follows, we concentrate on the minimal tensor product and the fully normal cases although most of the theory of Section \ref{sec:covCJ} applies also to the maximal tensor product case.

Recall that, for real (or convex) vector spaces $V_i$ and convex sets $C_i\subseteq V_i$, $i=1,\,2$, a map $f:C_1\to C_2$ is {\it affine} if $f\big(tx+(1-t)y\big)=tf(x)+(1-t)f(y)$ for any $x,\,y\in C_1$ and $t\in[0,1]$. An affine bijection $f:C_1\to C_2$, in particular, translates the convex structure of $C_1$ into $C_2$ and vice-versa. Especially, for an affine bijection $f:C_1\to C_2$, $x\in C_1$ is an extreme point of $C_1$ if and only if $f(x)$ is an extreme point of $C_2$.

Let us assume that $\mf M$ and $\mf N$ are von Neumann algebras and $\mf M$ is, additionally, $\sigma$-finite and injective. Fix a faithful state $\rho_0\in{\bf NS}_1(\mf M)$ and let $(\mf H,\Omega)$ be a GNS-construction for $\rho_0$ where $\Omega$ is cyclic and separating for $(\mf H,\mf M)$. The set ${\bf NCH}(\mf M,\mf N)$ is convex; if $\Phi_1,\,\Phi_2\in{\bf NCH}(\mf M,\mf N)$ and $t\in[0,1]$, defining $t\Phi_1+(1-t)\Phi_2:\mf N\to\mf M$, $\big(t\Phi_1+(1-t)\Phi_2\big)(b)=t\Phi_1(b)+(1-t)\Phi_2(b)$, $b\in\mf N$, we have $t\Phi_1+(1-t)\Phi_2\in{\bf NCH}(\mf M,\mf N)$. Similarly, ${\bf S}^{\rm bin}_{\rho_0}(\mf M\tmin\mf N)$ is convex. It follows that the normal Choi-Jamio\l kowski isomorphism ${\bf NCH}(\mf M,\mf N)\ni\Phi\mapsto S_\Omega^\Phi\in{\bf S}^{\rm bin}_{\rho_0}(\mf M'\tmin\mf N)$ is an affine bijection implying that the isomorphism perfectly encodes the convex structures of ${\bf NCH}(\mf M,\mf N)$ onto ${\bf S}^{\rm bin}_{\rho_0}(\mf M'\tmin\mf N)$. Especially $\Phi\in{\bf NCH}(\mf M,\mf N)$ is an extreme point of ${\bf NCH}(\mf M,\mf N)$ if and only if $S_\Omega^\Phi$ is an extreme point of ${\bf S}^{\rm bin}_{\rho_0}(\mf M'\tmin\mf N)$. The same situation naturally holds in the case of the non-normal and fully normal Choi-Jamio\l kowski isomorphism.

\begin{remark}
One might wish to define the binormal Choi-Jamio\l kowski state of a $\Phi\in{\bf NCH}(\mf M,\mf N)$ on $\mf M\tmin\mf N$ without using the commutant $\mf M'$. One way of trying to do this is through the linear map $\mf M\ni a\mapsto a^T\in\mf M'$, $a^T=j(a)^*$, $a\in\mf M$. We may now define $T_\Omega^\Phi:\mf M\alg\mf N\to\C$, $T_\Omega^\Phi(a\otimes b)=S_\Omega^\Phi(a^T\otimes b)$ or, more explicitly,
\begin{eqnarray*}
T_\Omega^\Phi(a\otimes b)=\<\Omega|j(a)^*\Phi(b)\Omega\>=\<j(a)\Omega|\Phi(b)\Omega\>&=&\<Ja\Omega|\Phi(b)\Omega\>\\
&=&\<\Delta^{1/2}a^*\Omega|\Phi(b)\Omega\>=\<\Omega|a\Delta^{1/2}\Phi(b)\Omega\>
\end{eqnarray*}
for all $a\in\mf M$ and $b\in\mf N$. Note that the use of $j$ instead of $a\mapsto a^T$ would be problematic as $j$ is conjugate linear. However, the extension of $T_\Omega^\Phi$ into a state on $\mf M\tmin\mf N$ typically fails, as the map $a\mapsto a^T$ is not completely positive. This map is, indeed, related to the transpose with respect to a basis fixed by the choice of the GNS-construction for the faithful state in the case when $\mf M=\mc L(\hil)$ with some separable Hilbert space $\hil$ as will become clear in the sequel. Such a transpose is famously not completely positive.
\end{remark}

\subsection{Normal channels between type-I factors}\label{subsec:typeI}

In this subsection we concentrate on type-I factors $\mf M=\mc L(\hil)$ and $\mf N=\mc L(\mc K)$ where $\hil$ and $\mc K$ are Hilbert spaces and $\hil$ is, moreover, separable. We retain these assumptions throughout this subsection. We give an in-depth description of the Choi-states of the corresponding fully quantum channels working in the Heisenberg picture. In particular, we wil discuss Kraus decompositions of quantum channels and their connection to spectral decompositions of Choi-states. Much of these results are established with future applications in mind.

Recall that we may identify states of a type-I factor with the set of positive trace-class operators of trace 1. This set will be denoted by $\mc S(\hil)$. Faithful states $\rho_0\in\mc S(\hil)$ correspond to injective state operators. Let us fix such a state $\rho_0=\sum_{\xi\in K}t_\xi|\xi\>\<\xi|$ where $K\subset\hil$ is an orthonormal basis (which is countable), $t_\xi>0$ for all $\xi\in K$, and $\sum_{\xi\in K}t_\xi=1$. We may choose the minimal GNS-construction $(\hil\otimes\hil,\Omega)$ for $\rho_0$ where $\Omega=\sum_{\xi\in K}\sqrt{t_\xi}\xi\otimes\xi$. When defining the Choi-Jamio\l kowski isomorphism, we may choose more general minimal GNS-vectors $\Omega$ for $\rho_0$, but the above choice is the simplest and usually a different choice does not matter much. Note however that equations \eqref{eq:recovery1} and \eqref{eq:recovery2} below hold usually only if we choose $\Omega=\sum_{\xi\in K}\sqrt{t_\xi}\xi\otimes\xi$. Note that $\mc L(\hil)$ operates on $\hil\otimes\hil$ in this context in the form $\mc L(\hil)\times\hil\otimes\hil\ni(A,\eta)\mapsto(A\otimes\id_\hil)\eta\in\hil\otimes\hil$. Hence, $\mc L(\hil)'$ within $\mc L(\hil\otimes\hil)$ is $\{\id_\hil\otimes A'\,|\,A'\in\mc L(\hil)\}$, i.e.,\ $\mc L(\hil)'\simeq\mc L(\hil)$ and the von Neumann algebra generated in $\mc L(\hil\otimes\hil)$ can thus be identified with $\mc L(\hil)'\vN\mc L(\hil)$. The latter observation yields that the state $\tilde{\rho_0}$ extends into the normal state $\hat{\rho_0}$ with the density operator $|\Omega\>\<\Omega|$. Thus, we have access to the normal Choi-Jamio\l kowski isomorphism. Naturally, ${\bf NS}_{\rho_0}\big(\mc L(\hil)'\vN\mc L(\mc K)\big)$ is identified with the set $\mc S_{\rho_0}(\hil\otimes\mc K)$ of those $S\in\mc S(\hil\otimes\mc K)$ whose first partial trace ${\rm tr}_{\mc K}[S]=\rho_0$.

Let $\Phi\in{\bf NCH}\big(\mc L(\hil),\mc L(\mc K)\big)=:{\bf NCH}(\hil,\mc K)$. In this setting, we have, for all $A'\in\mc L(\hil)$ and $B\in\mc L(\mc K)$,
$$
\tr{S_\Omega^\Phi(A'\otimes B)}=\<\Omega|\big(\Phi(B)\otimes A'\big)\Omega\>=\tr{(\Phi_*\otimes{\rm id}_{\mc T(\hil)}(|\Omega\>\<\Omega|)(B\otimes A')},
$$
where $\Phi_*\otimes{\rm id}_{\mc T(\hil)}:\,\mc T(\hil\otimes\hil)\to\mc T(\mc K\otimes\hil)$ is the extended map defined by $(\Phi_*\otimes{\rm id}_{\mc T(\hil)})(T_1\otimes T_2)=\Phi_*(T_1)\otimes T_2$, $T_1,\,T_2\in\mc T(\hil)$. Thus,\ $S_\Omega^\Phi=U_{\rm SWAP}\big(\Phi_*\otimes{\rm id}_{\mc T(\hil)}(|\Omega\>\<\Omega|)\big)U_{\rm SWAP}^*$, where $U_{\rm SWAP}:\mc K\otimes\hil\to\hil\otimes\mc K$ is the unitary defined through $U_{\rm SWAP}\psi\otimes\fii=\fii\otimes\psi$ for all $\fii\in\hil$ and $\psi\in\mc K$. Hence, we recover the traditional Choi-Jamio\l kowski isomorphism. The swap unitary $U_{\rm SWAP}$ appears only because of aesthetic reasons; we want $\mf M'$ to appear before $\mf N$ in Definition \ref{def:ChoiJamiolkowski} in order to conserve alphabetical order. If the order was $\mf N\vN\mf M'$ in Definition \ref{def:ChoiJamiolkowski}, $U_{\rm SWAP}$ would vanish here. Hence, $U_{\rm SWAP}$ is inessential.

Let $\Phi\in{\bf NCH}(\hil,\mc K)$ and denote the transpose of $A\in\mc L(\hil)$ with respect to the basis $K$ by $A^T$, i.e.,\ $\<\xi|A^T\zeta\>=\<\zeta|A\xi\>$ for all $\zeta,\,\xi\in K$. We have, for all $A'\in\mc L(\hil)$ and $B\in\mc L(\mc K)$,
\begin{eqnarray}
\tr{S_\Omega^\Phi(A'\otimes B)}&=&\<\Omega|\big(\Phi(B)\otimes A'\big)\Omega\>=\sum_{\zeta,\xi\in K}\sqrt{t_\zeta t_\xi}\<\zeta|\Phi(B)\xi\>\<\zeta|A'\xi\>\nonumber\\
&=&\sum_{\zeta,\xi\in K}\sqrt{t_\zeta t_\xi}\<\zeta|\Phi(B)\xi\>\<\xi|(A')^T\zeta\>=\tr{\rho_0^{1/2}(A')^T\rho_0^{1/2}\Phi(B)}.\label{eq:recovery1}
\end{eqnarray}
Especially, we find
\begin{equation}\label{eq:recovery2}
\<\zeta|\Phi(B)\xi\>=\frac{1}{\sqrt{t_\zeta t_\xi}}\tr{S_\Omega^\Phi(|\zeta\>\<\xi|\otimes B)},\qquad\zeta,\,\xi\in K,\quad B\in\mc L(\mc K).
\end{equation}

Using the definitions of the modular structures, it is simple to check that, in this type-I case, $\Delta=\rho_0\otimes\rho_0^{-1}$ with a suitably defined domain and $j(A\otimes A')=\ovl{A'}\otimes\ovl{A}$ for all $A,\,A'\in\mc L(\hil)$, where $\ovl A=(A^T)^*=(A^*)^T=:A^{T\,*}$, i.e.,\ $\<\zeta|\ovl A\xi\>=\ovl{\<\zeta|A\xi\>}$, $\zeta,\,\xi\in K$, for all $A\in\mc L(\hil)$. Hence, the map $A\mapsto j(A\otimes\id_\hil)^*$ is truly essentially the transpose map defined by the basis fixed when choosing the GNS-construction for $\rho_0$.

We move on to establish a connection between Kraus decompositions of quantum channels and spectral decompositions of their Choi states. First, however, we have to establish some basics in the dilation theory of channels and its relation to Kraus decompositions. Let $\Phi\in{\bf NCH}(\hil,\mc K)$ be a normal channel. We say that a pair $(\mc L,V)$ consisting of a Hilbert space $\mc L$ and an isometry $V:\hil\to\mc K\otimes\mc L$ constitutes a {\it Stinespring dilation} for $\Phi$ if $\Phi(B)=V^*(B\otimes\id_{\mc L})V$ for all $B\in\mc L(\mc K)$ or, equivalently. If, additionally, the vectors $(B\otimes\id_{\mc L})V\fii$, $B\in\mc L(\mc K)$, $\fii\in\hil$, span a dense subspace of $\mc K\otimes\mc L$, the dilation is {\it minimal}. This definition of a (minimal) Stinespring dilation coincides with the one presented in the beginning of this paper for channels $\Phi:\mc A\to\mc L(\hil)$ in the case where $\mc A=\mc L(\mc K)$ and $\Phi$ is normal. Every normal channel $\Phi\in{\bf NCH}(\hil,\mc K)$ has a minimal Stinespring dilation $(\mc L_0,V_0)$ and, for any other dilation $(\mc L,V)$ for $\Phi$, there is an isometry $W:\mc L_0\to\mc L$ such that $V=(\id_{\mc K}\otimes W)V_0$. Particularly if $(\mc L_0,V_0)$ and $(\mc L_1,V_1)$ are both minimal Stinespring dilations for $\Phi$, there is a unique unitary $U:\mc L_0\to\mc L_1$ such that $V_1=(\id_{\mc K}\otimes U)V_0$.

Operators $K_\lambda:\hil\to\mc K$, $\lambda\in L$, where $L\neq\emptyset$, are said to constitute a {\it Kraus decomposition} or are {\it Kraus operators} for $\Phi\in{\bf NCH}(\hil,\mc K)$ if $\Phi(B)=\sum_{\lambda\in L}K_\lambda^*BK_\lambda$ for all $B\in\mc L(\mc K)$ where the series converges weakly. There is a connection between Stinespring dilations and Kraus decompositions: Any Stinespring dilation $(\mc L,V)$ for $\Phi$ and any orthonormal basis $\{\eta_\lambda\}_{\lambda\in L}\subset\mc L$ defines a set $\{K_\lambda\}_{\lambda\in L}$ of Kraus operators for $\Phi$ by defining the linear operators $V_\lambda:\mc K\to\mc K\otimes\mc L$, $V_\lambda\psi=\psi\otimes\eta_\lambda$, $\psi\in\mc K$, $\lambda\in L$, and setting $K_\lambda=V_\lambda^*V$ for all $\lambda\in L$. In this situation, we say that $\{K_\lambda\}_{\lambda\in L}$ {\it arise from the dilation $(\mc L,V)$ and the orthonormal basis $\{\eta_\lambda\}_{\lambda\in L}\subset\mc L$}. Moreover, any set $\{K_\lambda\}_{\lambda\in L}$, $L\neq\emptyset$, of Kraus operators for $\Phi$ arises from a Stinespring dilation and a choice of an orthonormal basis of the dilation space. Indeed, let $\{\eta_\lambda\}_{\lambda\in L}$ be the natural basis for $\ell^2_L$ and define $V:\hil\to\mc K\otimes\ell^2_L$, $V\fii=\sum_{\lambda\in L}K_\lambda\fii\otimes\eta_\lambda$, $\fii\in\hil$. It follows that $(\ell^2_L,V)$ is a Stinespring dilation for $\Phi$ and the original Kraus decomposition can be recovered in the same way as above.

Let $\{K_\lambda\}_{\lambda\in L}$ be a Kraus decomposition for $\Phi\in{\bf NCH}(\hil,\mc K)$. Since, for any $\fii\in\hil$ and $\psi\in\mc K$,
$$
\sum_{\lambda\in L}|\<\psi|K_\lambda\fii\>|^2=\sum_{\lambda\in L}\<K_\lambda\fii|\psi\>\<\psi|K_\lambda\fii\>\leq\|\psi\|^2\sum_{\lambda\in L}\<\fii|K_\lambda^*K_\lambda\fii\>=\|\psi\|^2\|\fii\|^2<\infty,
$$
it follows that, for any square-summable sequence $(\alpha_\lambda)_{\lambda\in L}\subset\C$ and $\fii\in\hil$ and $\psi\in\mc K$, $\big|\sum_{\lambda\in L}\alpha_\lambda\<\psi|K_\lambda\fii\>\big|<\infty$, according to the Cauchy-Schwarz inequality. Let us make a useful definition.

\begin{definition}
Let $\hil$ and $\mc K$ be Hilbert spaces. Linear operators $K_\lambda:\hil\to\mc K$, $\lambda\in L$, $L\neq\emptyset$, such that $\sum_{\lambda\in L}\|K_\lambda\fii\|^2<\infty$ for all $\fii\in\hil$, are {\it weakly independent} if, for a net $(\alpha_\lambda)_{\lambda\in L}$ of complex numbers such that $\sum_{\lambda\in L}|\alpha_\lambda|^2<\infty$, $\sum_{\lambda\in L}\alpha_\lambda\<\psi|K_\lambda\fii\>=0$ for all $\fii\in\hil$ and $\psi\in\mc K$, only if $\alpha_\lambda=0$ for all $\lambda\in L$.
\end{definition}

\begin{lemma}\label{lemma:StinespringKraus}
Let $\hil$ and $\mc K$ be Hilbert spaces and $\Phi\in{\bf NCH}(\hil,\mc K)$. A set $\{K_\lambda\}_{\lambda\in L}$, $L\neq\emptyset$, of Kraus operators for $\Phi$ is weakly independent if and only if it arises from a minimal Stinespring dilation $(\mc L,V)$ for $\Phi$ and some orthonormal basis $\{\eta_\lambda\}_{\lambda\in L}\subset\mc L$ in the way defined above.
\end{lemma}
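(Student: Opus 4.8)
The statement is an "if and only if" characterizing weak independence of a Kraus decomposition in terms of minimality of the associated Stinespring dilation. My plan is to use the explicit correspondence between Kraus decompositions and pairs (Stinespring dilation, orthonormal basis) that was already established in the text, and to translate the two algebraic/analytic conditions—minimality of the dilation and weak independence of the Kraus operators—into the same linear-algebraic statement about the closed span of a certain family of vectors in the dilation space. First I would recall that, given the Kraus family $\{K_\lambda\}_{\lambda\in L}$, the canonical dilation is $(\ell^2_L,V)$ with $V\fii=\sum_{\lambda\in L}K_\lambda\fii\otimes\eta_\lambda$ and $\{\eta_\lambda\}_{\lambda\in L}$ the standard basis, so by the uniqueness-up-to-isometry property of dilations stated earlier, it suffices to decide when this particular dilation is minimal. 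Thus the whole problem reduces to: $\{K_\lambda\}$ is weakly independent $\iff$ the canonical dilation $(\ell^2_L,V)$ is minimal.

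\smallskip

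The key step is to identify the orthogonal complement of the minimal (essential) subspace $\overline{\mathrm{span}}\{(B\otimes\id_{\ell^2_L})V\fii\mid B\in\mc L(\mc K),\,\fii\in\hil\}$ inside $\mc K\otimes\ell^2_L$. A vector $\Theta\in\mc K\otimes\ell^2_L$ is orthogonal to this subspace precisely when $\<\Theta|(B\otimes\id)V\fii\>=0$ for all $B$ and $\fii$. Writing $\Theta=\sum_{\lambda}\psi_\lambda\otimes\eta_\lambda$ and expanding, I would compute
$$
\<\Theta|(B\otimes\id)V\fii\>=\sum_{\lambda\in L}\<\psi_\lambda|BK_\lambda\fii\>.
$$
Because $B$ ranges over all of $\mc L(\mc K)$, a natural move is to specialize to rank-one $B=\kb{\psi}{\chi}$, which disentangles the sum and shows that orthogonality for all $B$ is equivalent to $\sum_\lambda\overline{\<\psi|\psi_\lambda\>}\<\chi|K_\lambda\fii\>=0$ for all $\psi,\chi,\fii$. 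Setting $\alpha_\lambda=\overline{\<\psi|\psi_\lambda\>}$ for a fixed $\psi$ exhibits exactly a square-summable coefficient net annihilating all the matrix elements $\<\chi|K_\lambda\fii\>$, which is the defining condition in the weak-independence definition. So minimality fails (the complement contains a nonzero $\Theta$, equivalently some $\psi_\lambda\neq0$) if and only if there is a nonzero square-summable $(\alpha_\lambda)$ with $\sum_\lambda\alpha_\lambda\<\chi|K_\lambda\fii\>=0$ for all $\chi,\fii$, i.e.\ exactly when weak independence fails.

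\smallskip

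For the remaining direction—passing from an \emph{arbitrary} minimal dilation to weak independence—I would invoke the uniqueness of the minimal dilation up to a unitary $U:\mc L_0\to\ell^2_L$ stated in the text: the basis $\{\eta_\lambda\}$ of an abstract minimal $(\mc L,V)$ transports to a basis of $\ell^2_L$, the Kraus operators are unchanged, and weak independence is basis-independent, so the computation above applies verbatim. Conversely, if $\{K_\lambda\}$ arises from \emph{some} minimal dilation and basis, the same unitary identification reduces it to the canonical model, where minimality forces the complement to be $\{0\}$, hence weak independence. I expect the main obstacle to be purely bookkeeping: justifying the interchange of the (weakly convergent, possibly uncountable) sum over $\lambda$ with the inner products, and checking that the square-summability hypotheses needed to invoke the weak-independence definition hold. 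These follow from the Cauchy–Schwarz estimate displayed just before the definition, namely $\sum_\lambda\|K_\lambda\fii\|^2=\|\fii\|^2<\infty$ together with $\|\Theta\|^2=\sum_\lambda\|\psi_\lambda\|^2<\infty$, so the interchange is legitimate and every coefficient net that appears is genuinely square-summable.
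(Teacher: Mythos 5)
Your proposal is correct and follows essentially the same route as the paper: the central step in both is to expand a candidate vector $v=\sum_\lambda\psi_\lambda\otimes\eta_\lambda$ orthogonal to the essential subspace, specialize $B$ to (essentially) rank-one form so that the coefficients $\<\psi_\lambda|\psi\>$ become a square-summable net annihilating all matrix elements $\<\chi|K_\lambda\fii\>$, and match this against the definition of weak independence. The only difference is organizational --- you route the case of an arbitrary minimal dilation through the canonical model $(\ell^2_L,V)$ via uniqueness up to unitary equivalence, while the paper argues that direction directly by forming $\tj=\sum_\lambda\ovl{\alpha_\lambda}\eta_\lambda$ and invoking minimality --- and this does not change the substance of the argument.
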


\begin{proof}
Assume first that the Kraus operators $K_\lambda$, $\lambda\in L$, are weakly independent. Let $\{\eta_\lambda\}_{\lambda\in L}$ be the natural basis for the square-summable sequence space $\ell^2_L$ and define $V:\hil\to\mc K\otimes\ell^2_L$, $V\fii=\sum_{\lambda\in L}K_\lambda\fii\otimes\eta_\lambda$ so that $(\ell^2_L,V)$ is a Stinespring dilation for $\Phi$. Let us prove that this dilation is also minimal. Let $v=\sum_{\lambda\in L}\psi_\lambda\otimes\eta_\lambda$ be a general vector in $\mc K\otimes\ell^2_L$, i.e.,\ $\sum_{\lambda\in L}\|\psi_\lambda\|^2<\infty$. Pick $\fii\in\hil$, $\psi\in\mc K\setminus\{0\}$, and $B_0\in\mc L(\mc K)$. Define $V_\psi:\ell^2_L\to\mc K\otimes\ell^2_L$, $V_\psi\tj=\psi\otimes\tj$ for all $\tj\in\ell^2_L$. It follows that $\sum_{\lambda\in L}|\<\psi_\lambda|\psi\>|^2=\|V_\psi^*v\|^2<\infty$. Denote $B=|\psi\>\<\psi|B_0$. Assume that
$$
0=\<v|(B\otimes\id_{\ell^2_L})V\fii\>=\sum_{\lambda\in L}\<\psi_\lambda|\psi\>\<B_0^*\psi|K_\lambda\fii\>
$$
for all $\fii\in\hil$, $\psi\in\mc K\setminus\{0\}$, and $B_0\in\mc L(\mc K)$. Varying $\fii\in\hil$ and $B_0\in\mc L(\mc K)$, we see that $\sum_{\lambda\in L}\<\psi_\lambda|\psi\>\<\psi'|K_\lambda\fii'\>=0$ for all $\fii'\in\hil$ and $\psi'\in\mc K$. According to the weak independence of the Kraus operators, this means that $\<\psi_\lambda|\psi\>=0$ for all $\lambda\in L$. Since this holds for all $\psi\in\mc K\setminus\{0\}$, we have that $\psi_\lambda=0$ for all $\lambda\in L$, i.e.,\ $v=0$. This shows the minimality of $(\ell^2_L,V)$.

Let now $(\mc L,V)$ be a minimal Stinespring dilation for $\Phi$, $\{\eta_\lambda\}_{\lambda\in L}$ be an orthonormal basis of $\mc L$, and $\{K_\lambda\}_{\lambda\in L}$ be the set of Kraus operators for $\Phi$ arising from $(\mc L,V)$ and $\{\eta_\lambda\}_{\lambda\in L}$. Let us show that the set $\{K_\lambda\}_{\lambda\in L}$ is weakly independent. Let $(\alpha_\lambda)_{\lambda\in L}$ be a net of complex numbers such that $\sum_{\lambda\in L}|\alpha_\lambda|^2<\infty$ and define $\tj:=\sum_{\lambda\in L}\ovl{\alpha_\lambda}\eta_\lambda\in\mc L$. Suppose that
$$
0=\sum_{\lambda\in L}\alpha_\lambda\<\psi|K_\lambda\fii\>=\sum_{\lambda\in L}\alpha_\lambda\<\psi\otimes\eta_\lambda|V\fii\>=\<\psi\otimes\tj|V\fii\>
$$
for all $\fii\in\hil$ and $\psi\in\mc K$. Denoting $\psi=B^*\psi_0$ for some $\psi_0\in\mc K\setminus\{0\}$ and varying $B\in\mc L(\mc K)$, we especially have that $\<\psi_0\otimes\tj|(B\otimes\id_{\mc L})V\fii\>=0$ for all $B\in\mc L(\mc K)$ and $\fii\in\hil$. The minimality of $(\mc L,V)$ implies that $\psi_0\otimes\tj=0$ and, since $\psi\neq0$, $\tj=0$, i.e.,\ $\alpha_\lambda=0$ for all $\lambda\in L$.
\end{proof}

The preceding lemma justifies the following definition:

\begin{definition}
Let $\hil$ and $\mc K$ be Hilbert spaces and $\Phi\in{\bf NCH}(\hil,\mc K)$. We say that $\{K_\lambda\}_{\lambda\in L}$, $L\neq\emptyset$, where $K_\lambda:\hil\to\mc K$ are linear operators for all $\lambda\in L$, is a {\it minimal set of Kraus operators} for $\Phi$ if they provide a Kraus decomposition for $\Phi$ and are weakly independent.
\end{definition}

For any $\rho\in\mc S(\hil)$ and $\Phi\in{\bf NCH}(\hil,\mc K)$, there is a (minimal) set $\{K_\lambda\}_{\lambda\in L}$ of Kraus operators for $\Phi$ such that, whenever $\lambda\neq\lambda'$, $\tr{\rho K_\lambda^*K_{\lambda}}=0$. Indeed, for a (minimal) Stinespring dilation $(\mc L,V)$ for $\Phi$, let $\{\eta_\lambda\}_{\lambda\in L}$ be an orthonormal basis diagonalizing the state ${\rm tr}_\hil[V\rho V^*]$ and $\{K_\lambda\}_{\lambda\in L}$ be the (minimal) set of Kraus operators arising from $(\mc L,V)$ and $\{\eta_\lambda\}_{\lambda\in L}$. Whenever $\lambda,\,\lambda'\in L$, $\lambda\neq\lambda'$,
$$
\tr{\rho K_\lambda^*K_{\lambda'}}=\tr{V\rho V^*(\id_\hil\otimes|\eta_\lambda\>\<\eta_{\lambda'}|)}=\<\eta_{\lambda'}|{\rm tr}_\hil [V\rho V^*]\eta_\lambda\>=0.
$$
It is easy to see that, whenever $\{K_\lambda\}_{\lambda\in L}$ is a set of (non-zero) Kraus operators for a channel $\Phi$ such that $\tr{\rho_0 K_\lambda^*K_{\lambda'}}=0$ whenever $\lambda\neq\lambda'$ for some faithful $\rho_0\in\mc S(\hil)$, $\{K_\lambda\}_{\lambda\in L}$ is a minimal set of Kraus operators for $\Phi$; see the end of the proof of the following proposition for this.

\begin{proposition}\label{prop:KrausSp}
Let $\hil$ be a separable Hilbert space and $\mc K$ be another Hilbert space. Pick a faithful state $\rho_0\in\mc S(\hil)$ and a minimal GNS-vector $\Omega$ for $\rho_0$.
\begin{itemize}
\item[(a)] Let $\Phi\in{\bf NCH}(\hil,\mc K)$ and pick a set $\{K_\lambda\}_{\lambda\in L}$, $L\neq\emptyset$, of Kraus operators for $\Phi$ such that $\tr{\rho_0K_\lambda^*K_{\lambda'}}=0$ whenever $\lambda\neq\lambda'$. The set $\{w_\lambda\}_{\lambda\in L}\subset\hil\otimes\mc K$, $w_\lambda=(\id_\hil\otimes K_\lambda)\Omega$, $\lambda\in L$, is orthogonal and $S_\Omega^\Phi=\sum_{\lambda\in L}|w_\lambda\>\<w_\lambda|$.
\item[(b)] Suppose that $S\in\mc S(\hil\otimes\mc K)$ is such that ${\rm tr}_{\mc K}[S]=\rho_0$ and let $\Phi\in{\bf NCH}(\hil,\mc K)$ be the channel such that $S=S_\Omega^\Phi$. For any orthogonal decomposition $S=\sum_{\lambda\in L}|w_\lambda\>\<w_\lambda|$ such that $w_\lambda\in\hil\otimes\mc K\setminus\{0\}$ for all $\lambda\in L$, $L\neq\emptyset$, there is a minimal set $\{K_\lambda\}_{\lambda\in L}$ of Kraus operators for $\Phi$ such that $\tr{\rho_0K_\lambda^*K_{\lambda'}}=0$ whenever $\lambda\neq\lambda'$ and $w_\lambda=(\id_\hil\otimes K_\lambda)\Omega$ for all $\lambda\in L$.
\end{itemize}
\end{proposition}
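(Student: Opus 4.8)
The plan is to handle the two directions by different means: item (a) by a direct computation, and item (b) by transporting a canonical Kraus family through the unitary freedom in Stinespring dilations, the one real difficulty being the \emph{boundedness} of the reconstructed operators. For item (a) I would expand $\Omega=\sum_{\xi\in K}\sqrt{t_\xi}\,\xi\otimes\xi$ and record, for any bounded $K_\lambda,K_{\lambda'}:\hil\to\mc K$,
\[
\<w_\lambda|w_{\lambda'}\>=\sum_{\xi\in K}t_\xi\<K_\lambda\xi|K_{\lambda'}\xi\>=\tr{\rho_0 K_\lambda^*K_{\lambda'}} .
\]
The case $\lambda\neq\lambda'$ gives orthogonality directly from the hypothesis, while $\lambda=\lambda'$ summed over $\lambda$ gives $\sum_\lambda\|w_\lambda\|^2=\tr{\rho_0\Phi(\id_{\mc K})}=1$, so $\sum_\lambda|w_\lambda\>\<w_\lambda|$ converges in trace norm to a state. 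To identify that state with $S_\Omega^\Phi$ I would pair both operators against an arbitrary product operator $A'\otimes B$: expanding $\<w_\lambda|(A'\otimes B)w_\lambda\>$ and summing over $\lambda$ via $\sum_\lambda K_\lambda^*BK_\lambda=\Phi(B)$ reproduces exactly the expression $\sum_{\zeta,\xi}\sqrt{t_\zeta t_\xi}\<\zeta|\Phi(B)\xi\>\<\zeta|A'\xi\>$ obtained for $\tr{S_\Omega^\Phi(A'\otimes B)}$ in \eqref{eq:recovery1}; as rank-one product operators are total in the trace pairing, the two trace-class operators coincide.

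For item (b) the operators are already determined: writing $w_\lambda=\sum_{\xi\in K}\xi\otimes v_{\lambda,\xi}$ one is forced to set $K_\lambda\xi=t_\xi^{-1/2}v_{\lambda,\xi}$, that is $K_\lambda=R_{w_\lambda}\rho_0^{-1/2}$ under the unitary Hilbert--Schmidt identification $w\mapsto R_w$ with $R_w\xi=(\<\xi|\otimes\id_{\mc K})w$, for which $R_{w_\lambda}=K_\lambda\rho_0^{1/2}$. The unbounded factor $\rho_0^{-1/2}$ is precisely where the difficulty sits: the margin condition ${\rm tr}_{\mc K}[S]=\rho_0$ only yields $\|K_\lambda\xi\|\leq1$ on the basis $K$, which falls well short of boundedness. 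Instead of estimating $K_\lambda$ head-on I would introduce a reference decomposition. Fixing a minimal Stinespring dilation $(\mc L,V)$ of $\Phi$ and an orthonormal basis $\{\eta_\mu\}_{\mu\in M}\subset\mc L$ diagonalizing the reduction of $V\rho_0V^*$, as in the construction preceding the proposition, yields Kraus operators $\{L_\mu\}$ with $\tr{\rho_0 L_\mu^*L_{\mu'}}=0$ for $\mu\neq\mu'$; by item (a) the vectors $u_\mu=(\id_\hil\otimes L_\mu)\Omega$ are then orthogonal with $\sum_\mu|u_\mu\>\<u_\mu|=S_\Omega^\Phi=S$, and by Lemma \ref{lemma:StinespringKraus} the $L_\mu$ are weakly independent, hence all nonzero.

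Both $\{w_\lambda\}_{\lambda\in L}$ and $\{u_\mu\}_{\mu\in M}$ are orthogonal families of nonzero vectors decomposing the same $S$, so after normalization each is an orthonormal basis of $\overline{\ran S}$; in particular $L$ and $M$ have the same cardinality. The pivotal step is to produce a genuine unitary $(c_{\lambda\mu})$ with $w_\lambda=\sum_\mu c_{\lambda\mu}u_\mu$. Setting $Ae_\lambda=w_\lambda$ and $Be_\mu=u_\mu$ defines Hilbert--Schmidt operators $A:\ell^2_L\to\hil\otimes\mc K$ and $B:\ell^2_M\to\hil\otimes\mc K$ with $AA^*=BB^*=S$; orthogonality together with $w_\lambda\neq0$, $u_\mu\neq0$ makes $A$ and $B$ injective, so the map $D$ determined by $DA^*\zeta=B^*\zeta$ (well defined and isometric since $\|A^*\zeta\|=\|B^*\zeta\|$, with full domain and range because $A,B$ are injective) is a unitary $\ell^2_L\to\ell^2_M$ with $A=BD$. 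Its matrix $c_{\lambda\mu}=\<e_\mu|De_\lambda\>$ gives $w_\lambda=\sum_\mu c_{\lambda\mu}u_\mu$. Now put $K_\lambda:=\sum_\mu c_{\lambda\mu}L_\mu$; here the boundedness obstruction evaporates, for $\sum_\mu|c_{\lambda\mu}|^2=1$ and $\sum_\mu\|L_\mu\phi\|^2=\<\phi|\Phi(\id_{\mc K})\phi\>=\|\phi\|^2$ yield, by Cauchy--Schwarz, $\|K_\lambda\phi\|\leq\|\phi\|$, so each $K_\lambda$ is a contraction.

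It remains to assemble the claimed properties. Interchanging the now-legitimate sums gives $(\id_\hil\otimes K_\lambda)\Omega=\sum_\mu c_{\lambda\mu}u_\mu=w_\lambda$. Moreover $\{K_\lambda\}$ arises from the very same minimal dilation $(\mc L,V)$ and the rotated orthonormal basis $\eta'_\lambda=\sum_\mu\overline{c_{\lambda\mu}}\eta_\mu$, because $V'_\lambda\psi=\psi\otimes\eta'_\lambda=\sum_\mu\overline{c_{\lambda\mu}}(\psi\otimes\eta_\mu)$ forces $(V'_\lambda)^*V=\sum_\mu c_{\lambda\mu}L_\mu=K_\lambda$; hence Lemma \ref{lemma:StinespringKraus} shows at once that the $K_\lambda$ are Kraus operators for $\Phi$ and that they are weakly independent, i.e.\ minimal, while $\tr{\rho_0 K_\lambda^*K_{\lambda'}}=\<w_\lambda|w_{\lambda'}\>=0$ for $\lambda\neq\lambda'$ is the identity of item (a). Finally, to supply the claim preceding the proposition that the orthogonality relation by itself forces minimality, I would argue directly: if $\sum_\lambda\alpha_\lambda\<\psi|K_\lambda\phi\>=0$ for all $\phi,\psi$ with $\sum_\lambda|\alpha_\lambda|^2<\infty$, then replacing $\phi$ by $\rho_0^{1/2}\phi$ gives $\sum_\lambda\alpha_\lambda R_{w_\lambda}=0$, whence $\sum_\lambda\alpha_\lambda w_\lambda=0$ by unitarity of $w\mapsto R_w$; since faithfulness of $\rho_0$ makes $K_\lambda\neq0$ equivalent to $\tr{\rho_0 K_\lambda^*K_\lambda}=\|w_\lambda\|^2>0$, orthogonality forces every $\alpha_\lambda=0$.
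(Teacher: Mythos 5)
Your item (a) is the same direct computation as the paper's. Item (b), however, is correct but follows a genuinely different route. The paper builds the Stinespring isometry directly from the given decomposition: writing $w_\lambda=\sum_{\xi\in K}\xi\otimes\psi_{\lambda,\xi}$, it defines $V\xi=t_\xi^{-1/2}\sum_{\lambda}\psi_{\lambda,\xi}\otimes\eta_\lambda$, verifies $\<V\zeta|V\xi\>=\delta_{\zeta\xi}$ from the margin condition ${\rm tr}_{\mc K}[S]=\rho_0$, checks via Equation \eqref{eq:recovery2} that $(\ell^2_L,V)$ dilates $\Phi$, and reads off $K_\lambda\xi=t_\xi^{-1/2}\psi_{\lambda,\xi}$ as $V_\lambda^*V$ --- so boundedness of each $K_\lambda$ comes for free from $V$ being an isometry, which is exactly how the paper resolves the obstruction you isolate; weak independence is then verified by the direct $\sum_{\lambda}|\alpha_\lambda|^2\|w_\lambda\|^2=0$ computation (the same one you give for the remark preceding the proposition). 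You instead fix a reference orthogonal Kraus family $\{L_\mu\}$, use part (a) to see that $\{u_\mu\}$ and $\{w_\lambda\}$ are two orthogonal decompositions of the same $S$, extract the connecting unitary $(c_{\lambda\mu})$ from $AA^*=BB^*=S$, and define $K_\lambda=\sum_\mu c_{\lambda\mu}L_\mu$, which is a contraction by Cauchy--Schwarz and arises from the same minimal dilation with the rotated basis $\eta'_\lambda=\sum_\mu\ovl{c_{\lambda\mu}}\eta_\mu$, so minimality follows at once from Lemma \ref{lemma:StinespringKraus}. What your route buys is a cleaner conceptual picture --- orthogonal decompositions of the Choi state correspond to orthonormal bases of the dilation space rotating into one another --- and it delegates weak independence entirely to the lemma; what the paper's route buys is self-containedness and brevity, constructing the dilation in one step without needing the uniqueness of minimal dilations or the polar-decomposition argument for $D$. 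The only points you should flesh out are routine: the interchange of sums in $(\id_\hil\otimes K_\lambda)\Omega=\sum_\mu c_{\lambda\mu}u_\mu$ (justified since $v\mapsto\sum_\mu c_{\lambda\mu}(\id_\hil\otimes L_\mu)v$ is a contraction agreeing with $\id_\hil\otimes K_\lambda$ on elementary tensors) and the completeness of the rotated system $\{\eta'_\lambda\}$ in $\mc L$ (which follows from surjectivity of $D$). Neither is a gap.
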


\begin{proof}
Throughout this proof, we may fix a spectral decomposition $\rho_0=\sum_{\xi\in K}t_\xi|\xi\>\<\xi|$ where $K\subset\hil$ is an orthogonal basis and $t_\xi>0$ for all $\xi\in K$ are such that $\sum_{\xi\in K}t_\xi=1$ and we may choose $\Omega=\sum_{\xi\in K}\sqrt{t_\xi}\xi\otimes\xi$. This choice does not restrict the generality of this proof and makes the calculations more straightforward.

Let us prove item (a). Define $w_\lambda$, $\lambda\in L$, as in the claim. It easily follows that, whenever $\lambda\neq\lambda'$,
$$
\<w_\lambda|w_{\lambda'}\>=\<\Omega|(\id_\hil\otimes K_\lambda^*K_{\lambda'})\Omega\>=\tr{\rho_0K_\lambda^*K_{\lambda'}}=0,
$$
and
$$
S_\Omega^\Phi=({\rm id}\otimes\Phi)(|\Omega\>\<\Omega|)=\sum_{\lambda\in L}|(\id_\hil\otimes K_\lambda)\Omega\>\<(\id_\hil\otimes K_\lambda)\Omega|=\sum_{\lambda\in L}|w_\lambda\>\<w_\lambda|.
$$

Let us go on to proving item (b). Let $S=\sum_{\lambda\in L}|w_\lambda\>\<w_\lambda|$ be an orthogonal decomposition with non-zero vectors $w_\lambda\in\hil\otimes\mc K$, $\lambda\in L$, where $L$ is some non-empty set. For each $\lambda\in L$, let $\{\psi_{\lambda,\xi}\}_{\xi\in K}\subset\mc K$ be a set such that $\sum_{\xi\in K}\|\psi_{\lambda,\xi}\|^2<\infty$ and $w_\lambda=\sum_{\xi\in K}\xi\otimes\psi_{\lambda,\xi}$. Define the linear operator $V$ on the linear span of the basis $K$ and with the target space $\hil\otimes\ell^2_L$ through $V\xi=t_\xi^{-1/2}\sum_{\lambda\in L}\psi_{\lambda,\xi}\otimes\eta_\lambda$ where $\{\eta_\lambda\}_{\lambda\in L}$ is the natural basis of $\ell^2_L$. We have, for all $\zeta,\,\xi\in K$,
\begin{eqnarray*}
\<V\zeta|V\xi\>&=&\frac{1}{\sqrt{t_\zeta t_\xi}}\sum_{\lambda\in L}\<\psi_{\lambda,\zeta}|\psi_{\lambda,\xi}\>=\frac{1}{\sqrt{t_\zeta t_\xi}}\sum_{\lambda\in L}\<w_\lambda|(|\zeta\>\<\xi|\otimes\id_{\mc K})w_\lambda\>\\
&=&\frac{1}{\sqrt{t_\zeta t_\xi}}\tr{S(|\zeta\>\<\xi|\otimes\id_{\mc K})}=\frac{1}{\sqrt{t_\zeta t_\xi}}\<\xi|\rho_0\zeta\>=\left\{\begin{array}{ll}
1&{\rm when}\ \zeta=\xi,\\
0&{\rm otherwise},
\end{array}\right.
\end{eqnarray*}
implying that $V$ can be extended into an isometry $V:\hil\to\mc K\otimes\ell^2_L$. Moreover, for all $\zeta,\,\xi\in K$ and $B\in\mc L(\mc K)$, we have
\begin{eqnarray*}
\<\zeta|\Phi(B)\xi\>&=&\frac{1}{\sqrt{t_\zeta t_\xi}}\tr{S(|\zeta\>\<\xi|\otimes B)}=\frac{1}{\sqrt{t_\zeta t_\xi}}\<w_\lambda|(|\zeta\>\<\xi|\otimes B)w_\lambda\>\\
&=&\frac{1}{\sqrt{t_\zeta t_\xi}}\sum_{\lambda\in L}\sum_{\zeta',\xi'\in K}\<\zeta'\otimes\psi_{\lambda,\zeta'}|(|\zeta\>\<\xi|\otimes B)(\xi'\otimes\psi_{\lambda,\xi'})\>\\
&=&\frac{1}{\sqrt{t_\zeta t_\xi}}\sum_{\lambda\in L}\<\psi_{\lambda,\zeta}|B\psi_{\lambda,\xi}\>=\<V\zeta|(B\otimes\id_{\ell^2_L})V\xi\>,
\end{eqnarray*}
implying that $(\ell^2_L,V)$ is a Stinespring dilation for $\Phi$. Let $\{K_\lambda\}_{\lambda\in L}$ be the set of Kraus operators for $\Phi$ arising from $(\ell^2_L,V)$ and $\{\eta_\lambda\}_{\lambda\in L}$. It follows that $K_\lambda\xi=t_\xi^{-1/2}\psi_{\lambda,\xi}$ for all $\lambda\in L$ and $\xi\in K$. We find 
$$
(\id_\hil\otimes K_\lambda)\Omega=\sum_{\xi\in K}\sqrt{t_\xi}\xi\otimes K_\lambda\xi=\sum_{\xi\in K}\xi\otimes\psi_{\lambda,\xi}=w_\lambda
$$
for all $\lambda\in L$. Moreover, for any $\lambda,\,\lambda'\in L$, $\lambda\neq\lambda'$,
$$
\tr{\rho_0K_\lambda^*K_{\lambda'}}=\sum_{\xi\in K}t_\xi\<K_\lambda\xi|K_{\lambda'}\xi\>=\sum_{\xi\in K}\<\psi_{\lambda,\xi}|\psi_{\lambda',\xi}\>=\<w_\lambda|w_{\lambda'}\>=0.
$$

Assume that $\alpha_\lambda\in\C$, $\lambda\in L$, are such that $\sum_{\lambda\in L}|\alpha_\lambda|^2<\infty$ and $\sum_{\lambda\in L}\alpha_\lambda\<\psi|K_\lambda\fii\>=0$ for all $\fii\in\hil$ and $\psi\in\mc K$. It follows that $\sum_{\lambda,\lambda'\in L}\ovl{\alpha_\lambda}\alpha_{\lambda'}\<K_\lambda\fii|K_{\lambda'}\fii\>=0$ for all $\fii\in\hil$. From this it follows that
$$
0=\sum_{\lambda,\lambda'\in L}\ovl{\alpha_\lambda}\alpha_{\lambda'}\tr{\rho_0K_\lambda^*K_{\lambda'}}=\sum_{\lambda\in L}|\alpha_\lambda|^2\|w_\lambda\|^2,
$$
implying, since $w_\lambda\neq0$ for all $\lambda\in L$, that $\alpha_\lambda=0$ for all $\lambda\in L$. Thus $\{K_\lambda\}_{\lambda\in L}$ is a minimal set of Kraus operators for $\Phi$.
\end{proof}

\subsection{Transposed channels and their Choi-Jamio\l kowski states}\label{subsec:transpose}

The transposed channels play a role in sufficiency questions and quantum information retrieval \cite{JencovaPetz2006}. We will see that a channel and its transpose share essentially the same Choi-Jamio\l kowski state.

Let $\mf M$ and $\mf N$ be $\sigma$-finite von Neumann algebras, $\rho_0\in{\bf NS}_1(\mf M)$ and $\rho_1\in{\bf NS}_1(\mf N)$ be faithful, $(\mf H_i,\Omega_i)$ be a GNS-construction for $\rho_i$, $i=0,\,1$, where $\Omega_0$ is cyclic and separating for $(\mf H_0,\mf M)$ and $\Omega_1$ is cyclic and separating for $(\mf H_1,\mf N)$, let $j_{\mf M}$ be the modular conjugation associated to $\Omega_0$ and $j_{\mf N}$ be the modular conjugation associated to $\Omega_1$, and denote by $\mf M'$ the commutant of $\mf M$ within $\mc L(\mf H_0)$ and by $\mf N'$ the commutant of $\mf N$ within $\mc L(\mf H_1)$. The following definition slightly modifies the well-known concept of transposed channels. The map $\Phi_{\Omega_0,\Omega_1}^\#$ in the definition below is well defined, and in the case where $\rho_1=\rho_0\circ\Phi$, one can directly consult \cite[Proposition 3.1]{AcCe82} on the matter. In the more general case, the proof simply uses standard methods of dilation theory and is similar to the latter half of the proof of Theorem \ref{theor:CJisomorphism}.

\begin{definition}\label{def:transposed}
Let us make the above assumptions on the von Neumann algebras $\mf M$ and $\mf N$. Whenever $\Phi:\mf N\to\mf M$ is a normal (completely) positive linear map such that there is $\lambda\geq0$ such that $\rho_0\circ\Phi\leq\lambda\rho_1$, the unique normal (completely) positive linear map $\Phi_{\Omega_0,\Omega_1}^\#:\mf M'\to\mf N'$ defined through
$$
\<\Omega_1|\Phi_{\Omega_0,\Omega_1}^\#(a')b\Omega_1\>=\<\Omega_0|a'\Phi(b)\Omega_0\>,\qquad a'\in\mf M',\quad b\in\mf N,
$$
is called as the {\it $(\Omega_0,\Omega_1)$-commutant dual of $\Phi$}, and $\Phi_{\Omega_0,\Omega_1}^T:=j_{\mf N}\circ\Phi_{\Omega_0,\Omega_1}^\#\circ j_{\mf M}:\mf M\to\mf N$ is called as the {\it $(\Omega_0,\Omega_1)$-transpose of $\Phi$}. When $\rho_1=\rho_0\circ\Phi$, we denote $\Phi_{\Omega_0,\Omega_1}^\#=:\Phi_{\Omega_0}^\#$ and $\Phi_{\Omega_0,\Omega_1}^T=:\Phi_{\Omega_0}^T$.
\end{definition}

Let us additionally assume that $\mf L$ is a $\sigma$-finite von Neumann algebra with a faithful state $\rho_2$ and that $\rho_2$ has a GNS-construction $(\mf H_2,\Omega_2)$ where $\Omega_2$ is cyclic and separating for $(\mf H_2,\mf L)$. Assume that $\Psi:\mf L\to\mf N$ and $\Phi:\mf N\to\mf M$ are normal (completely) positive linear maps such that $\rho_1\circ\Psi\leq\lambda_1\rho_2$ and $\rho_0\circ\Phi\leq\lambda_0\rho_1$ for some $\lambda_0,\,\lambda_1\leq0$. Then one easily finds that
$$
(\Phi\circ\Psi)_{\Omega_0,\Omega_2}^\#=\Psi_{\Omega_1,\Omega_2}^\#\circ\Phi_{\Omega_0,\Omega_1}^\#,\quad (\Phi\circ\Psi)_{\Omega_0,\Omega_2}^T=\Psi_{\Omega_1,\Omega_2}^T\circ\Phi_{\Omega_0,\Omega_1}^T.
$$

Let $\mf M$ and $\mf N$ both be von Neumann algebras where $\mf M$ is $\sigma$-finite and possesses a faithful state $\rho_0\in{\bf NS}_1(\mf M)$. Whenever $\Phi\in{\bf NCH}(\mf M,\mf N)$, we may, without any essential loss of generality, assume that $\rho_0\circ\Phi\in{\bf NS}_1(\mf N)$ is faithful as well. To see this, suppose that $p\in\mf N$ is the support of $\Phi$ \cite[Section 10.8]{kirja}, i.e.,\ the infimum of the projections $q\in\mf N$ such that $\Phi(q)=1_{\mf M}$. Naturally, $p$ is a projection as well. It follows that $\Phi(b)=\Phi(pbp)$ for all $b\in\mf N$. Define the von Neumann algebra $\mf N_p:=p\mf Np$; naturally we may restrict states $\sigma\in{\bf NS}_1(\mf N)$ onto $\mf N_p$. Define $\Phi_p\in{\bf NCH}(\mf M,\mf N_p)$, $\Phi_p=\Phi|_{\mf N_p}$. Define $\kappa\in{\bf NCH}(\mf N_p,\mf N)$, $\kappa(b)=pbp$ for all $b\in\mf N$. It follows that, for all $b\in\mf N$, $(\Phi_p\circ\kappa)(b)=\Phi(pbp)=\Phi(b)$, i.e.,\ $\Phi_p\circ\kappa=\Phi$. Similarly, when we define $\iota:\mf N_p\to\mf N$ as the natural inclusion, $\Phi\circ\iota=\Phi_p$. Suppose now that $\tilde{b}\in\mf N_p$, $\tilde{b}\geq0$, and $(\rho_0\circ\Phi_p)(\tilde{b})=0$. Since $\rho_0$ is faithful, $\Phi_p(\tilde{b})=0$, i.e.,\ $0=\Phi(\tilde{b})=\Phi(p\tilde{b}p)$, implying that $\tilde{b}=p\tilde{b}p=0$. This means that we may always restrict $\Phi$ onto an algebra such that the restriction is, in a sense, equivalent with $\Phi$ and $\rho_0\circ\Phi$ is faithful on this algebra. Thus, the assumption made above on the $\sigma$-finiteness of $\mf N$ is actually redundant, and we may always assume that $\rho_0\circ\Phi$ is faithful.

\begin{proposition}\label{prop:transposed}
Let $\mf M$ and $\mf N$ be von Neumann algebras where $\mf M$ is injective and $\sigma$-finite and possesses a faithful state $\rho_0$. Let $\Phi\in{\bf NCH}(\mf M,\mf N)$ and assume that $\rho_1:=\rho_0\circ\Phi$ is faithful. We equip $\rho_i$ with a GNS-construction $(\mf H_i,\Omega_i)$ where $\Omega_i$ is cyclic and separating, $i=1,\,2$. We have, for all $a\in\mf M$, $a'\in\mf M'$, $b\in\mf N$, and $b'\in\mf N'$,
\begin{eqnarray*}
S_{\Omega_1}^{\Phi_{\Omega_0}^\#}(b\otimes a')&=&S_{\Omega_0}^\Phi(a'\otimes b),\\
S_{\Omega_1}^{\Phi_{\Omega_0}^T}(b'\otimes a)&=&\ovl{S_{\Omega_0}^\Phi\big(j_{\mf M}(a)\otimes j_{\mf N}(b')\big)}.
\end{eqnarray*}
\end{proposition}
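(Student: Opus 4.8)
The plan is to check both equalities directly on the algebraic tensor product, unwinding the definition \eqref{eq:ChoiJamiolkowski2} of the Choi-Jamio\l kowski state for the transposed maps and invoking the defining relation of the commutant dual from Definition \ref{def:transposed}. Note first that $\Phi_{\Omega_0}^\#$ and $\Phi_{\Omega_0}^T$ are well defined, since $\rho_0\circ\Phi=\rho_1\leq\rho_1$ supplies the required bound. As $\mf N$ need not be injective, I read the states $S_{\Omega_1}^{\Phi_{\Omega_0}^\#}$ and $S_{\Omega_1}^{\Phi_{\Omega_0}^T}$ through the maximal-tensor-product form of the isomorphism discussed after Definition \ref{def:ChoiJamiolkowski}; since both members of each asserted identity are states, verifying the equalities on elementary tensors suffices. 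Throughout I will use that $\Omega_1$ is simultaneously cyclic and separating for $(\mf H_1,\mf N)$ and for $(\mf H_1,\mf N')$, that $(\mf N')'=\mf N$, that elements of $\mf N$ commute with elements of $\mf N'$, and the standard modular facts that the antilinear modular conjugation $J_{\mf N}$ (with $j_{\mf N}(\cdot)=J_{\mf N}(\cdot)J_{\mf N}$) fixes $\Omega_1$, satisfies $J_{\mf N}^2=\id_{\mf H_1}$, and is antiunitary in the sense $\<J_{\mf N}\xi|J_{\mf N}\eta\>=\<\eta|\xi\>$.

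For the first identity, I observe that $\Phi_{\Omega_0}^\#\in{\bf NCH}(\mf N',\mf M')$ has Heisenberg-output algebra $\mf N'$, whose commutant within $\mc L(\mf H_1)$ is $(\mf N')'=\mf N$. Hence its Choi-Jamio\l kowski state, built from the cyclic and separating vector $\Omega_1$ for $(\mf H_1,\mf N')$, is given on elementary tensors by $S_{\Omega_1}^{\Phi_{\Omega_0}^\#}(b\otimes a')=\<\Omega_1|b\,\Phi_{\Omega_0}^\#(a')\Omega_1\>$ for $b\in\mf N$ and $a'\in\mf M'$. Because $b\in\mf N$ and $\Phi_{\Omega_0}^\#(a')\in\mf N'$ commute, I may interchange them and then apply the defining relation of $\Phi_{\Omega_0}^\#$ to reach $\<\Omega_0|a'\Phi(b)\Omega_0\>=S_{\Omega_0}^\Phi(a'\otimes b)$. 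This identity is therefore essentially immediate.

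The second identity carries the genuine content. Writing out the Choi state of $\Phi_{\Omega_0}^T\in{\bf NCH}(\mf N,\mf M)$ gives $S_{\Omega_1}^{\Phi_{\Omega_0}^T}(b'\otimes a)=\<\Omega_1|b'\,\Phi_{\Omega_0}^T(a)\Omega_1\>$ for $b'\in\mf N'$, $a\in\mf M$, where $\Phi_{\Omega_0}^T(a)=J_{\mf N}\,\Phi_{\Omega_0}^\#(j_{\mf M}(a))\,J_{\mf N}$. Abbreviating $c':=\Phi_{\Omega_0}^\#(j_{\mf M}(a))\in\mf N'$ and using $J_{\mf N}\Omega_1=\Omega_1$, I rewrite the inner product as $\<\Omega_1|b'J_{\mf N}c'\Omega_1\>$ and apply antiunitarity of $J_{\mf N}$ together with $J_{\mf N}^2=\id_{\mf H_1}$ to transform it into $\<j_{\mf N}(b')\,c'\,\Omega_1|\Omega_1\>=\ovl{\<\Omega_1|c'\,j_{\mf N}(b')\Omega_1\>}$, where $j_{\mf N}(b')=J_{\mf N}b'J_{\mf N}\in\mf N$ now commutes with $c'\in\mf N'$. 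Finally I apply the defining relation of $\Phi_{\Omega_0}^\#$ with the arguments $j_{\mf M}(a)\in\mf M'$ and $j_{\mf N}(b')\in\mf N$, obtaining $\ovl{\<\Omega_0|j_{\mf M}(a)\,\Phi(j_{\mf N}(b'))\,\Omega_0\>}=\ovl{S_{\Omega_0}^\Phi\big(j_{\mf M}(a)\otimes j_{\mf N}(b')\big)}$, as claimed.

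The step I expect to be the main obstacle is the antilinear bookkeeping in the second identity: one must carefully track the complex conjugation produced by $J_{\mf N}$ and the repeated cancellations via $J_{\mf N}^2=\id_{\mf H_1}$ and $J_{\mf N}\Omega_1=\Omega_1$, and, most importantly, verify that the conjugations send $b'\in\mf N'$ to $j_{\mf N}(b')\in\mf N$ and $a\in\mf M$ to $j_{\mf M}(a)\in\mf M'$, so that both arguments fall in precisely the algebras ($\mf M'$ and $\mf N$) for which the defining relation of $\Phi_{\Omega_0}^\#$ is formulated. The commutation $j_{\mf N}(b')c'=c'j_{\mf N}(b')$ is exactly what brings the expression into the shape of that defining relation.
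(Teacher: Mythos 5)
Your proof is correct and follows essentially the same route as the paper's: both identities are verified on elementary tensors by commuting elements of $\mf N$ past elements of $\mf N'$, invoking the defining relation of $\Phi_{\Omega_0}^\#$, and, for the transpose, tracking the antilinearity of $J_{\mf N}$ via $J_{\mf N}\Omega_1=\Omega_1$ and $J_{\mf N}^2=\id_{\mf H_1}$ exactly as in the paper's computation. Your remark that the Choi states of $\Phi_{\Omega_0}^\#$ and $\Phi_{\Omega_0}^T$ should be read through the maximal-tensor-product form of the isomorphism (since $\mf N$ is not assumed injective) is a careful point the paper glosses over, but it does not alter the argument.
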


\begin{proof}
We have, for all $a'\in\mf M'$ and $b\in\mf N$,
$$
S_{\Omega_1}^{\Phi_{\Omega_0}^\#}(b\otimes a')=\<\Omega_1|b\Phi_{\Omega_0}^\#(a')\Omega_1\>=\<\Omega_0|a'\Phi(b)\Omega_0\>=S_{\Omega_0}^\Phi(a'\otimes b).
$$
Moreover, for all $a\in\mf M$ and $b'\in\mf N'$,
\begin{eqnarray*}
S_{\Omega_1}^{\Phi_{\Omega_0}^T}(b'\otimes a)&=&\<\Omega_1|b'\Phi_{\Omega_0}^T(a)\Omega_1\>=\<\Omega_1|b'J_{\mf N}(\Phi_{\Omega_0}^\#\circ j_{\mf M})(a)\Omega_1\>\\
&=&\<j_{\mf N}(b')(\Phi_{\Omega_0}^\#\circ j_{\mf M})(a)\Omega_1|\Omega_1\>=\ovl{\<\Omega_1|(\Phi_{\Omega_0}^\#\circ j_{\mf M})(a)j_{\mf N}(b')\Omega_1\>}\\
&=&\ovl{\<\Omega_0|j_{\mf M}(a)(\Phi\circ j_{\mf N})(b')\Omega_0\>}=\ovl{S_{\Omega_0}^\Phi\big(j_{\mf M}(a)\otimes j_{\mf N}(b')\big)}.
\end{eqnarray*}
\end{proof}

When $\mf M=\mc L(\hil)$ and $\mf N=\mc L(\mc K)$ with some separable Hilbert spaces $\hil$ and $\mc K$, $\Phi\in{\bf NCH}(\hil,\mc K)$, $\rho_0\in\mc S(\hil)$ is faithful, and we assume (without any essential loss of generality) that $\rho_1:=\Phi_*(\rho_0)$ is faithful as well, $\Phi$ and its $\rho_0$-transpose are associated through
\begin{equation}\label{eq:transposed}
\rho_1^{1/2}\Phi_{\rho_0}^T(A)\rho_1^{1/2}=\Phi_*(\rho_0^{1/2}A\rho_0^{1/2}),\qquad A\in\mc L(\hil);
\end{equation}
We implicitly choose GNS-vectors arising from spectral decompositions of $\rho_0$ and $\rho_1$ (with real phases) and only refer to the state $\rho_0$ in the transpose. This can be seen as follows: When we identify $\mc L(\hil)'=\mc L(\hil)$ and $\mc L(\mc K)'=\mc L(\mc K)$, the involution $j_{\mc L(\hil)}$ when restricted to $\mc L(\hil)$ and defined to take values in $\mc L(\hil)'=\mc L(\hil)$ has already seen to be the matrix-entry-wise complex conjugation with respect to the basis $K$ where $\rho_0$ is diagonalized, i.e.,\ $j_{\mc L(\hil)}(A)=A^{T\,*}$ where the transpose is defined with respect to the same basis $K$. The same applies, naturally to $j_{\mc L(\mc K)}$. Using Equation \eqref{eq:recovery1} and Proposition \ref{prop:transposed}, we now have, for all $A\in\mc L(\hil)$ and $B'\in\mc L(\mc K)$,
\begin{eqnarray*}
\tr{\rho_1^{1/2}(B')^T\rho_1^{1/2}\Phi_{\rho_0}^T(A)}&=&\tr{S_{\Omega_1}^{\Phi_{\rho_0}^T}(B'\otimes A)}=\ovl{\tr{S_{\Omega_0}^\Phi\big(A^{T\,*}\otimes(B')^{T\,*}\big)}}\\
&=&\ovl{\tr{\rho_0^{1/2}A^*\rho_0^{1/2}\Phi\big((B')^T\big)^*}}=\tr{\Phi\big((B')^T\big)\rho_0^{1/2}A\rho_0^{1/2}},
\end{eqnarray*}
from where Equation \eqref{eq:transposed} now follows. Clearly, $\Phi_{\rho_0}^\#(A)=\Phi_{\rho_0}^T(A^T)^{T'}$ for all $A\in\mc L(\hil)$ where $\mc L(\mc K)\ni B\mapsto B^{T'}\in\mc L(\mc K)$ is the transpose defined with respect to the chosen eigenbasis of $\rho_1$.

\section{Choi-Jamio\l kowski states of covariant channels}\label{sec:covCJ}

We now go on to applying the above established Choi-Jamio\l kowski isomorphism to covariant channels. We focus on the case of normal channels. After introducing some basic definitions and specifying what we mean with `covariant channels' (the definition used here is the usual one), we establish necessary and sufficient conditions for the covariance of a channel using its Choi state. In the following subsection, we concentrate again on the case of type-I factors.

Let us first recall some terminology. Pick a group $G$ and a Hilbert space $\hil$. We denote the group of unitary operators on $\hil$ by $\mc U(\hil)$. A {\it unitary representation of $G$ on $\hil$} is a homomorphism $U:G\to\mc U(\hil)$. A {\it projective unitary representation of $G$ on $\hil$} is a map $U:G\to\mc U(\hil)$ such that $U(e)=\id_\hil$, $e$ being the neutral element of $G$, and $U(gh)=m(g,h)U(g)U(h)$ for all $g,\,h\in G$, where $m:G\times G\to\T$, $\T=\{z\in\C\,|\,|z|=1\}$, is a {\it multiplier}, i.e., $m(e,g)=m(g,e)=1$ for all $g\in G$ and $m(g,h)m(gh,k)=m(g,hk)m(h,k)$ for all $g,\,h,\,k\in G$. A map $g\mapsto\alpha_g$ defined on $G$ and taking values in the group of automorphisms of $\mc L(\hil)$ is a group homomorphism if and only if there is a projective unitary representation $U:G\to\mc U(\hil)$ such that $\alpha_g(A)=U(g)AU(g)^*$ for all $g\in G$ and $A\in\mc L(\hil)$ \cite[Theorem 7.5]{Varadarajan}.

In what follows $\mf M$ is an injective $\sigma$-finite von Neumann algebra and we fix a faithful state $\rho_0\in{\bf NS}_1(\mf M)$ together with a GNS construction $(\mf H,\Omega)$ for $\rho_0$ where $\Omega$ is a cyclic and separating vector for $(\mf H,\mf M)$ and let $S$, $\Delta$, $J$, and $j$ be the associated modular structures. Denote by $V_\Omega$ the strong closure of the set $\{aj(a)\Omega\,|\,a\in\mf M\}$ and by $V_\Omega^1$ the section of norm-1 vectors in $V_\Omega$. According to Araki in \cite{Araki74}, there is a bijective homeomorphism $\chi:{\bf NS}_1(\mf M)\to V_\Omega^1$ such that $\<\chi(\rho)|a\chi(\rho)\>=\rho(a)$ for all $\rho\in{\bf NS}_1(\mf M)$ and $a\in\mf M$.

We denote the group of normal automorphisms of $\mf M$ by ${\rm Aut}(\mf M)$. We equip ${\rm Aut}(\mf M)$ with point-predual-norm topology, i.e.,\ the coarsest topology with respect to which the maps ${\rm Aut}(\mf M)\ni\alpha\mapsto\rho\circ\alpha\in\mf M_*$, $\rho\in\mf M_*$, are continuous with respect to the Banach space topology of $\mf M_*$. According to \cite[Theorem 11 and subsequent discussion]{Araki74}, there is a strongly continuous unitary representation ${\rm Aut}(\mf M)\ni\alpha\mapsto U_\alpha\in\mc U(\mf H)$ such that $\alpha(a)=U_\alpha aU_\alpha^*$ for all $\alpha\in{\rm Aut}(\mf M)$ and $a\in\mf M$. Note that injectivity is not required for the existence of this representation. Moreover, $U_\alpha J=JU_\alpha$ and $U_\alpha^*\chi(\rho)=\chi(\rho\circ\alpha)$ for all $\alpha\in{\rm Aut}(\mf M)$ and $\rho\in{\bf NS}_1(\mf M)$. Denote by $\mf M'$ the commutant of $\mf M$ within $\mc L(\mf H)$. There is an isomorphism ${\rm Aut}(\mf M)\ni\alpha\mapsto\alpha'\in{\rm Aut}(\mf M')$ given by $\alpha'=j\circ\alpha\circ j$ for all $\alpha\in{\rm Aut}(\mf M)$. Using the fact that $\alpha\mapsto U_\alpha$ commutes with $J$, we find $\alpha'(a')=JU_\alpha Ja'JU_\alpha^*J=U_\alpha a'U_\alpha^*$.

For any $\rho\in{\bf NS}_1(\mf M)$, denote the group of those $\alpha\in{\rm Aut}(\mf M)$ such that $\rho\circ\alpha=\rho$ by $A_\rho$. As a preimage of the compact set $\{0\}$ in the continuous map ${\rm Aut}(\mf M)\ni\alpha\mapsto\rho-\rho\circ\alpha\in\mf M_*$, $A_\rho$ is closed. We have $U_\alpha^*\chi(\rho)=\chi(\rho\circ\alpha)=\chi(\rho)$ for all $\alpha\in A_\rho$ and $\rho\in{\bf NS}_1(\mf M)$. Especially, $U_{\alpha}^*\Omega=\Omega$ for all $\alpha\in A_{\rho_0}$.

When $G$ is a group and $\mf M$ is a von Neumann algebra, we say that a map $G\ni g\mapsto\alpha_g\in{\rm Aut}(\mf M)$ is a {\it $G$-action on $\mf M$} if it is a group homomorphism. If, additionally, $G$ is a topological group, and the action $g\mapsto\alpha_g$ is continuous with respect to the group topology and the point-predual-norm topology of ${\rm Aut}(\mf M)$, we simply say that the action is continuous.

Throughout this section, we also assume that $\mf N$ is a von Neumann algebra. In many of the cases, we could just assume that $\mf N$ is only a unital $C^*$-algebra and use the non-normal Choi-Jamio\l kowski isomorphism but, for simplicity and physical motivation, we only consider the full von Neumann algebra case and binormal (or normal, when possible) Choi states. We also fix a group $G$ and a $G$-action $g\mapsto\alpha_g$ on $\mf M$ and $g\mapsto\beta_g$ on $\mf N$. We denote, for every $g,\,h\in G$, by $\alpha_g\otimes\beta_h\in{\rm Aut}(\mf M\vN\mf N)$ the unique extension of the map $\mf M\times\mf N\ni(a,b)\mapsto\alpha_g(a)\otimes\beta_h(b)\in\mf M\vN\mf N$ \cite[Chapter IV, Proposition 5.13]{TakesakiI}.

\begin{definition}
We say that a channel $\Phi\in{\bf NCH}(\mf M,\mf N)$ is {\it $(\alpha,\beta)$-covariant} if $\alpha_g\circ\Phi=\Phi\circ\beta_g$ for all $g\in G$. We denote the set of $(\alpha,\beta)$-covariant channels $\Phi\in{\bf NCH}(\mf M,\mf N)$ by ${\bf NCH}_\alpha^\beta$.
\end{definition}

\begin{theorem}\label{theor:covCJ}
Assume that $\mf M$ and $\mf N$ are von Neumann algebras with $\mf M$ $\sigma$-finite and injective. Fix a faithful normal state $\rho_0$ of $\mf M$ and a GNS-construction $(\mf H,\Omega)$ for $\rho_0$ where $\Omega$ is cyclic and separating for $(\mf H,\mf M)$. A channel $\Phi\in{\bf NCH}(\mf M,\mf N)$ is $(\alpha,\beta)$-covariant if and only if we have, for the binormal Choi-Jamio\l kowski state of $\Phi$,
\begin{equation}\label{eq:covCJ}
S_\Omega^\Phi\circ(\alpha'_g\otimes\beta_g)=S_{U_{\alpha_g}^*\Omega}^\Phi,\qquad {\it for\ all}\ g\in G.
\end{equation}
If additionally, for each $g\in G$, there is $\lambda_g\geq0$ such that $\rho_0\circ\alpha_g\leq\lambda_g\rho_0$, defining $\gamma_g:=(\alpha_{g^{-1}})_{\Omega,\Omega}^\#$ for all $g\in G$, a channel $\Phi\in{\bf NCH}(\mf M,\mf N)$ is $(\alpha,\beta)$-covariant if and only if
\begin{equation}\label{eq:covCJ2}
S_\Omega^\Phi\circ(\gamma_g\otimes\beta_g)=S_\Omega^\Phi,\qquad {\it for\ all}\ g\in G.
\end{equation}
\end{theorem}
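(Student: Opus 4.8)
The plan is to prove both equivalences by unfolding the definition $S_\Omega^\Phi(a'\otimes b)=\<\Omega|a'\Phi(b)\Omega\>$ and then using the cyclic and separating properties of $\Omega$ to pass from equality of matrix elements against $\mf M'\Omega$ to equality of operators in $\mf M$. Throughout I would write $u_g:=U_{\alpha_g}$ and invoke the structural facts recalled before the theorem: $\alpha_g(a)=u_ga u_g^*$, $\alpha'_g(a')=u_ga'u_g^*$ with $\alpha'_g\in{\rm Aut}(\mf M')$ and $u_gJ=Ju_g$, and --- since $\alpha\mapsto U_\alpha$ is a genuine unitary representation and $g\mapsto\alpha_g$ a homomorphism --- $u_{g^{-1}}=u_g^*$.

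For the first equivalence \eqref{eq:covCJ}, I would rewrite the right-hand side directly as
$$S_{u_g^*\Omega}^\Phi(a'\otimes b)=\<u_g^*\Omega|a'\Phi(b)u_g^*\Omega\>=\<\Omega|u_ga'u_g^*\,u_g\Phi(b)u_g^*\Omega\>=\<\Omega|\alpha'_g(a')\,\alpha_g(\Phi(b))\Omega\>,$$
while the left-hand side is $\<\Omega|\alpha'_g(a')\Phi(\beta_g(b))\Omega\>$. Since $\alpha'_g$ is a bijection of $\mf M'$, the identity \eqref{eq:covCJ} holds for all $a'\in\mf M'$ and $b\in\mf N$ if and only if $\<\Omega|a''\big(\Phi(\beta_g(b))-\alpha_g(\Phi(b))\big)\Omega\>=0$ for every $a''\in\mf M'$ and $b\in\mf N$. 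Both $\Phi(\beta_g(b))$ and $\alpha_g(\Phi(b))$ lie in $\mf M$; because $\Omega$ is cyclic for $\mf M'$ the vectors $(a'')^*\Omega$ are dense, forcing $\big(\Phi(\beta_g(b))-\alpha_g(\Phi(b))\big)\Omega=0$, and then separation of $\Omega$ for $\mf M$ yields $\Phi(\beta_g(b))=\alpha_g(\Phi(b))$. As this is exactly $(\alpha,\beta)$-covariance, both implications follow at once; note this part uses nothing about the majorization hypothesis.

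For the second equivalence \eqref{eq:covCJ2}, I would first observe that the hypothesis $\rho_0\circ\alpha_g\leq\lambda_g\rho_0$, holding for every group element and in particular for $g^{-1}$ so that $\rho_0\circ\alpha_{g^{-1}}\leq\lambda_{g^{-1}}\rho_0$, is exactly what Definition \ref{def:transposed} needs to make $\gamma_g=(\alpha_{g^{-1}})_{\Omega,\Omega}^\#:\mf M'\to\mf M'$ well-defined, with defining relation $\<\Omega|\gamma_g(a')c\,\Omega\>=\<\Omega|a'\alpha_{g^{-1}}(c)\Omega\>$ for all $a'\in\mf M'$ and $c\in\mf M$. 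Applying this with $c=\Phi(\beta_g(b))\in\mf M$ gives
$$S_\Omega^\Phi\big(\gamma_g(a')\otimes\beta_g(b)\big)=\<\Omega|\gamma_g(a')\Phi(\beta_g(b))\Omega\>=\<\Omega|a'\,\alpha_{g^{-1}}\big(\Phi(\beta_g(b))\big)\Omega\>,$$
an identity valid regardless of covariance. Hence \eqref{eq:covCJ2} is equivalent to $\<\Omega|a'\big(\alpha_{g^{-1}}(\Phi(\beta_g(b)))-\Phi(b)\big)\Omega\>=0$ for all $a'\in\mf M'$ and $b\in\mf N$, and the same cyclic/separating reduction (again both operators lie in $\mf M$) turns this into $\alpha_{g^{-1}}(\Phi(\beta_g(b)))=\Phi(b)$, i.e.\ $\Phi(\beta_g(b))=\alpha_g(\Phi(b))$, which is covariance. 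Thus \eqref{eq:covCJ2} holds for all $g$ exactly when $\Phi$ is $(\alpha,\beta)$-covariant.

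The routine content is the bookkeeping of unitaries and the two identical density/faithfulness reductions. The one place demanding genuine care --- and the main obstacle --- is the second equivalence: I must verify that $\gamma_g$ is legitimately defined on all of $\mf M'$ with values in $\mf M'$, and that its defining relation applies to the specific element $c=\Phi(\beta_g(b))$; this is where the assumption $\rho_0\circ\alpha_g\leq\lambda_g\rho_0$ is used, and where I rely on $u_{g^{-1}}=u_g^*$ (so $\alpha_{g^{-1}}(c)=u_g^*cu_g$) to match the commutant dual $\gamma_g$ against the conjugation by $u_g$ that appears implicitly in the first equivalence. Once these identifications are secured, the argument collapses to the density of $\mf M'\Omega$ and the faithfulness of $\rho_0$.
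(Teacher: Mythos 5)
Your proposal is correct and follows essentially the same route as the paper: unfold $S_\Omega^\Phi$ on elementary tensors, use the unitary implementation $U_{\alpha_g}$ of $\alpha_g$ and $\alpha'_g$ for \eqref{eq:covCJ}, use the defining relation of the commutant dual $\gamma_g=(\alpha_{g^{-1}})_{\Omega,\Omega}^\#$ for \eqref{eq:covCJ2}, and conclude via cyclicity of $\Omega$ for $\mf M'$ (equivalently, the separating property for $\mf M$). The only cosmetic difference is that you package each equivalence as a single ``difference annihilated by $\mf M'\Omega$'' condition rather than running the two implications as separate chains of equalities, and you omit the paper's aside that $g\mapsto\gamma_g$ is a one-parameter family inverted by $\gamma_{g^{-1}}$, which is not needed for the proof.
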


\begin{proof}
Let $\Phi\in{\bf NCH}_\alpha^\beta$. For any $g\in G$, $a'\in\mf M'$, and $b\in\mf N$,
\begin{eqnarray*}
S_\Omega^\Phi\big(\alpha'_g(a')\otimes\beta_g(b)\big)&=&\<\Omega|\alpha'_g(a')(\Phi\circ\beta_g)(b)\Omega\>=\<\Omega|\alpha'_g(a')(\alpha_g\circ\Phi)(b)\Omega\>\\
&=&\<\Omega|U_{\alpha_g}a'U_{\alpha_g}^*U_{\alpha_g}\Phi(b)U_{\alpha_g}^*\Omega\>=\<U_{\alpha_g}^*\Omega|a'\Phi(b)U_{\alpha_g}^*\Omega\>\\
&=&S_{U_{\alpha_g}^*\Omega}(a'\otimes b).
\end{eqnarray*}
Next, suppose that Equation \eqref{eq:covCJ} holds. We now have, for all $a'\in\mf M'$, $b\in\mf N$, and $g\in G$,
\begin{eqnarray*}
\<\Omega|a'(\Phi\circ\beta_g)(b)\Omega\>&=&S_\Omega^\Phi\big(a'\otimes\beta_g(b)\big)=S_{U_{\alpha_g}^*\Omega}\big(\alpha'_{g^{-1}}(a')\otimes b\big)\\
&=&\<U_{\alpha_g}^*\Omega|\alpha'_{g^{-1}}(a')\Phi(b)U_{\alpha_g}^*\Omega\>=\<\Omega|U_{\alpha_g}U_{\alpha_g}^*a'U_{\alpha_g}\Phi(b)U_{\alpha_g}^*\Omega\>\\
&=&\<\Omega|a'(\alpha_g\circ\Phi)(b)\Omega\>.
\end{eqnarray*}
Since $\Omega$ is cyclic for $(\mf H,\mf M')$, the above means that $\Phi\in{\bf NCH}_\alpha^\beta$.

Assume now that, for each $g\in G$, there is $\lambda_g\geq0$ such that $\rho_0\circ\alpha_g\leq\lambda_g\rho_0$. Using the properties of the commutant dual, we have, for all $g,\,h\in G$,
$$
\gamma_{gh}=(\alpha_{h^{-1}g^{-1}})_{\Omega,\Omega}^\#=(\alpha_{g^{-1}})_{\Omega,\Omega}^\#\circ(\alpha_{h^{-1}})_{\Omega,\Omega}^\#=\gamma_g\otimes\gamma_h.
$$
Moreover, $\gamma_g$ is invertible by $\gamma_{g^{-1}}$ since $({\rm id}_{\mf M})_{\Omega,\Omega}^\#={\rm id}_{\mf M'}$, as one easily checks. Let $\Phi\in{\bf NCH}_\alpha^\beta$. We have, for all $g\in G$, $a'\in\mf M'$, and $b\in\mf N$,
\begin{eqnarray*}
S\big(\gamma_g(a')\otimes\beta_g(b)\big)=\<\Omega|\gamma_g(a')(\Phi\circ\beta_g)(b)\Omega\>=\<\Omega|a'(\alpha_{g^{-1}}\circ\Phi\circ\beta_g)(b)\Omega\>&=&\<\Omega|a'\Phi(b)\Omega\>\\
&=&S(a'\otimes b),
\end{eqnarray*}
implying Equation \eqref{eq:covCJ2}. If, on the other hand, Equation \eqref{eq:covCJ2} holds for $\Phi\in{\bf NCH}(\mf M,\mf N)$, we have, for all $g\in G$, $a'\in\mf M'$, and $b\in\mf N$,
\begin{eqnarray*}
\<\Omega|a'(\Phi\circ\beta_g)(b)\Omega\>=S\big(a'\otimes\beta_g(b)\big)=S\big(\gamma_{g^{-1}}(a')\otimes b\big)&=&\<\Omega|\gamma_{g^{-1}}(a')\Phi(b)\Omega\>\\
&=&\<\Omega|a'(\alpha_g\circ\Phi)(b)\Omega\>,
\end{eqnarray*}
implying that $\Phi\in{\bf NCH}_\alpha^\beta$.
\end{proof}

Note that $g\mapsto\gamma_g$ of the above theorem is not an action since $\gamma_g$ typically fails to be an automorphism, although, as shown in the proof above, $g\mapsto\gamma_g$ could be called as a one-parameter group of normal completely positive maps as, for each $g\in G$, $\gamma_g$ is invertible by $\gamma_{g^{-1}}$, as one easily checks. It can already be gleaned from both equations \eqref{eq:covCJ} and \eqref{eq:covCJ2} that in the invariant case $\rho_0\alpha_g=\rho_0$, for all $g\in G$, the covariance condition significantly simplifies. This is highlighted in the following section (Section \ref{sec:invariant}).

Let us briefly discuss the case where $\mf N=\mc B$ is a unital $C^*$-algebra and we concentrate on the non-(bi)normal Choi-Jamio\l kowski isomorphism ${\bf CH}(\mf M,\mc B)\ni\Phi\mapsto S_\Omega^\Phi\in{\bf S}_{\rho_0}(\mf M'\tmin\mc B)$. We further assume that $g\mapsto\alpha_g$ is an action of a group $G$ on $\mf M$ and $G\ni g\mapsto\beta_g\in{\rm Aut}(\mc B)$ is a homomorphism where ${\rm Aut}(\mc B)$ is the group of *-automorphisms of $\mc B$. The obvious modification of Theorem \ref{theor:covCJ} characterizes $(\alpha,\beta)$-covariant channels, i.e.,\ channels $\Phi\in{\bf CH}(\mf M,\mc B)$ such that $\Phi\circ\beta_g=\alpha_g\circ\Phi$ for all $g\in G$. In this context, $\alpha'_g\otimes\beta_h\in{\rm Aut}(\mf M\tmin\mc B)$ is the unique extension of the map $\mf M'\times\mc B\ni(a',b)\mapsto\alpha'_g(a')\otimes\beta_h(b)\in\mf M\tmin\mc B$ for all $g,\,h\in G$ \cite[Chapter IV, Proposition 4.23]{TakesakiI}. In the case of the normal Choi-Jamio\l kowski isomorphism (when attainable), the result is still essentially the same.

\subsection{Covariant quantum channels whose Heisenberg output algebra is a type-I factor}

Our task now is to formulate Theorem \ref{theor:covCJ} in the case of a type-I factor as the Heisenberg output algebra. In the general case, the result (Proposition \ref{prop:vaikea}) is, however, somewhat complicated. In the following section, we will see that in a great number of cases, the covariance condition for normal channels can be significantly simplified.

We retain the assumptions made above and assume, further, that $\mf M=\mc L(\hil)$ with some separable Hilbert space $\hil$; as before, we now have access to the fully normal Choi-Jamio\l kowski isomorphism. Let $\Omega\in\hil\otimes\hil$ be a cyclic and separating vector for $\big(\hil\otimes\hil,\mc L(\hil)\otimes\C\id_\hil\big)$ and define $\rho_0\in\mc S(\hil)$, $\tr{\rho_0A}=\<\Omega|(A\otimes\id_\hil)\Omega\>$ for all $A\in\mc L(\hil)$. We may freely assume that there is an orthonormal basis $K\subset\hil$ and $t_\xi>0$ for all $\xi\in K$ such that $\sum_{\xi\in K}t_\xi=1$ and $\Omega=\sum_{\xi\in K}\sqrt{t_\xi}\xi\otimes\xi$ implying that $\rho_0=\sum_{\xi\in K}t_\xi|\xi\>\<\xi|$. Recall that $j_\hil(A)=A^{T\,*}=\ovl A$ for all $A\in\mc L(\hil)$, where the transpose is defined with respect to $K$. All automorphisms on $\mc L(\hil)$ are inner meaning that, whenever $\alpha\in{\rm Aut}\big(\mc L(\hil)\big)$ there is a unitary $U\in\mc U(\hil)$ unique up to a phase factor such that $\alpha(A)=UAU^*$ for all $A\in\mc L(\hil)$. The commutant action $\alpha'\in{\rm Aut}\big(\mc L(\hil)'\big)={\rm Aut}\big(\mc L(\hil)\big)$ is now given by
$$
\alpha'(A')=\ovl{U\ovl{A'}U^*}=\ovl UA'\ovl U^*,\qquad A'\in\mc L(\hil).
$$
It follows that, for the representation ${\rm Aut}\big(\mc L(\hil)\big)\ni\alpha\mapsto U_\alpha\in\mc U(\hil\otimes\hil)$ fixed by $\Omega$ according to \cite{Araki74}, we have $U_\alpha=U\otimes\ovl U$, when $\alpha(A)=UAU^*$ for all $A\in\mc L(\hil)$ with some $U\in\mc U(\hil)$.

Let $G$ be a group and $G\ni g\mapsto\alpha_g\in{\rm Aut}\big(\mc L(\hil)\big)$ an action. Hence, it follows that there is a projective unitary representation $U:G\to\mc U(\hil)$ such that $\alpha_g(A)=U(g)AU(g)^*$ for all $g\in G$ and $A\in\mc L(\hil)$ \cite[Theorem 7.5]{Varadarajan}. We have $\alpha'_g(A')=\ovl{U(g)}A'\ovl{U(g)}^*$ for all $g\in G$ and $A'\in\mc L(\hil)$. Let $\mf N$ be another von Neumann algebra and $G\ni g\mapsto\beta_g\in{\rm Aut}(\mf N)$ an action. According to the above and Theorem \ref{theor:covCJ}, we find that $\Phi\in{\bf NCH}\big(\mc L(\hil),\mf N\big)$ is $(\alpha,\beta)$-covariant if and only if
$$
S_\Omega^\Phi\big(\ovl{U(g)}A'\ovl{U(g)}^*\otimes\beta_g(b)\big)=S_{\big(U(g)\otimes\ovl{U(g)}\big)^*\Omega}^\Phi(A'\otimes b),\qquad A'\in\mc L(\hil),\quad b\in\mf N,\quad g\in G.
$$

We may also formulate the following somewhat more explicit characterization of covariant channels.

\begin{proposition}\label{prop:vaikea}
Suppose that $\hil$ is a separable Hilbert space and $\mf N$ is a von Neumann algebra. Fix an orthonormal basis $K\subset\hil$ and numbers $t_\xi>0$, $\xi\in K$, such that $\sum_{\xi\in K}t_\xi=1$. Denote by $\mc F$ the family of finite subsets of $K$ and, for each $F\in\mc F$, denote $\rho_F:=\sum_{\xi\in F}t_\xi|\xi\>\<\xi|$, $\rho_F^{-1/2}:=\sum_{\xi\in F}t_\xi^{-1/2}|\xi\>\<\xi|$, and $\rho_K=:\rho_0$. We treat $\mc F\times\mc F$ as a directed set with respect to set inclusion. Suppose that $G\ni g\mapsto\alpha_g\in{\rm Aut}\big(\mc L(\hil)\big)$ and $G\ni g\mapsto\beta_g\in{\rm Aut}(\mf N)$ are actions where $\alpha_g(A)=U(g)AU(g)^*$ for all $g\in G$ and $A\in\mc L(\hil)$ with some projective unitary representation $U:G\to\mc U(\hil)$. Channels $\Phi\in{\bf NCH}_\alpha^\beta$ are in one-to-one correspondence with states $S\in{\bf NS}_{\rho_0}\big(\mc L(\hil)\vN\mf N\big)$ such that
\begin{equation}\label{eq:vaikea1}
\lim_{(F,D)\in\mc F\times\mc F}S\big(\rho_F^{-1/2}\ovl{U(g)}\rho_0^{1/2}A'\rho_0^{1/2}\ovl{U(g)}^*\rho_D^{-1/2}\otimes \beta_g(b)\big)=S(A'\otimes b)
\end{equation}
for all $g\in G$, $A'\in\mc L(\hil)$, and $b\in\mf N$. The correspondence is set by the equation
\begin{equation}\label{eq:vaikea2}
S(A'\otimes b)=\tr{\rho_0^{1/2}(A')^T\rho_0^{1/2}\Phi(b)},\qquad A'\in\mc L(\hil),\quad b\in\mf N,
\end{equation}
where the transpose is taken with respect to $K$.
\end{proposition}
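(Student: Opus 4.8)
The plan is to obtain the bijection from the normal Choi--Jamio\l kowski isomorphism already at our disposal and then to convert the covariance requirement into \eqref{eq:vaikea1} by a direct trace computation based on \eqref{eq:vaikea2}.

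To begin, recall that in this type-I setting $\Phi\mapsto S_\Omega^\Phi$ is the normal Choi--Jamio\l kowski isomorphism, hence an affine bijection of ${\bf NCH}\big(\mc L(\hil),\mf N\big)$ onto ${\bf NS}_{\rho_0}\big(\mc L(\hil)\vN\mf N\big)$, and that the formula \eqref{eq:vaikea2} is nothing but \eqref{eq:recovery1}. Consequently the correspondence via \eqref{eq:vaikea2} is automatic, and the whole content of the proposition---making Theorem \ref{theor:covCJ} explicit in the type-I case---is to show that, under this bijection, $\Phi\in{\bf NCH}_\alpha^\beta$ holds precisely when $S=S_\Omega^\Phi$ satisfies \eqref{eq:vaikea1}. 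The key elementary reformulation is that $(\alpha,\beta)$-covariance $\alpha_g\circ\Phi=\Phi\circ\beta_g$ is the same as $\alpha_{g^{-1}}\circ\Phi\circ\beta_g=\Phi$ for every $g\in G$; since $\Phi\mapsto S_\Omega^\Phi$ is injective, this in turn is equivalent to $S_\Omega^{\alpha_{g^{-1}}\circ\Phi\circ\beta_g}=S_\Omega^\Phi$ for all $g$. It therefore suffices to identify the Choi state of the twisted channel $\alpha_{g^{-1}}\circ\Phi\circ\beta_g$ with the left-hand side of \eqref{eq:vaikea1}.

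For this identification I would compute the transpose (with respect to $K$) of the truncated operator $X_{F,D}:=\rho_F^{-1/2}\ovl{U(g)}\rho_0^{1/2}A'\rho_0^{1/2}\ovl{U(g)}^*\rho_D^{-1/2}$. Using that $\rho_0^{1/2}$ and $\rho_F^{-1/2}$ are symmetric in the basis $K$ together with $(\ovl{U(g)}^*)^T=U(g)$ and $(\ovl{U(g)})^T=U(g)^*$, one gets
\[
X_{F,D}^T=\rho_D^{-1/2}U(g)\rho_0^{1/2}(A')^T\rho_0^{1/2}U(g)^*\rho_F^{-1/2}.
\]
Feeding this into \eqref{eq:vaikea2} and using $\rho_0^{1/2}\rho_D^{-1/2}=P_D$ and $\rho_F^{-1/2}\rho_0^{1/2}=P_F$, where $P_F:=\sum_{\xi\in F}\kb{\xi}{\xi}$, yields
\[
S\big(X_{F,D}\otimes\beta_g(b)\big)=\tr{P_D\,U(g)\rho_0^{1/2}(A')^T\rho_0^{1/2}U(g)^*\,P_F\,\Phi(\beta_g(b))}.
\]
Since $P_F,P_D\nearrow\id_\hil$ along $\mc F$, passing to the limit and using cyclicity of the trace together with $U(g)^*\Phi(\beta_g(b))U(g)=(\alpha_{g^{-1}}\circ\Phi\circ\beta_g)(b)$ gives $\tr{\rho_0^{1/2}(A')^T\rho_0^{1/2}(\alpha_{g^{-1}}\circ\Phi\circ\beta_g)(b)}$, which by \eqref{eq:vaikea2} equals $S_\Omega^{\alpha_{g^{-1}}\circ\Phi\circ\beta_g}(A'\otimes b)$. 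Hence \eqref{eq:vaikea1} asserts exactly $S_\Omega^{\alpha_{g^{-1}}\circ\Phi\circ\beta_g}=S$ for all $g$, and the reformulation above closes the argument.

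The one delicate point---and the reason \eqref{eq:vaikea1} is stated as a net limit over $\mc F\times\mc F$ rather than a bare operator identity---is that $\rho_0^{-1/2}$ is unbounded as soon as $\inf_\xi t_\xi=0$, so the formal operator $\rho_0^{-1/2}\ovl{U(g)}\rho_0^{1/2}A'\rho_0^{1/2}\ovl{U(g)}^*\rho_0^{-1/2}$ need not belong to $\mc L(\hil)$. The truncations $\rho_F^{-1/2},\rho_D^{-1/2}$ keep every $X_{F,D}$ bounded, and the limit is legitimate because $Y:=U(g)\rho_0^{1/2}(A')^T\rho_0^{1/2}U(g)^*$ is trace class (as $\rho_0^{1/2}$ is Hilbert--Schmidt) while $\Phi(\beta_g(b))$ is bounded, so $P_DYP_F\to Y$ in trace norm when $P_F,P_D\to\id_\hil$ strongly; this trace-class structure is precisely what underlies the normality of $S$. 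I expect this justification of the net limit to be the main obstacle, the remaining steps being routine bookkeeping of transposes and conjugations.
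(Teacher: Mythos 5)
Your proof is correct and follows essentially the same route as the paper's: both rest on the trace formula \eqref{eq:vaikea2}, the transpose identity for the truncated operator $\rho_F^{-1/2}\ovl{U(g)}\rho_0^{1/2}A'\rho_0^{1/2}\ovl{U(g)}^*\rho_D^{-1/2}$, and the strong convergence $\rho_0^{1/2}\rho_F^{-1/2}\to\id_\hil$. The only difference is organizational---you identify the limit on the left of \eqref{eq:vaikea1} as the Choi state of $\alpha_{g^{-1}}\circ\Phi\circ\beta_g$ and invoke injectivity of the isomorphism once, whereas the paper proves the two implications separately; your explicit trace-norm justification of the limit is in fact slightly more careful than the paper's appeal to strong convergence alone.
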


\begin{proof}
Equation \eqref{eq:vaikea2} is obtained in the same way as Equation \eqref{eq:recovery2} from the normal Choi-Jamio\l kowsky isomorphism associated with the vector $\Omega=\sum_{\xi\in K}\sqrt{t_\xi}\xi\otimes\xi$. It is simple to check that $(\rho_0^{1/2}\rho_F^{-1/2})_{F\in\mc F}$ converges strongly to $\id_\hil$. Suppose that $\Phi\in{\bf NCH}_\alpha^\beta$ and denote $S=S_\Omega^\Phi$. We have, for all $g\in G$, $A'\in\mc L(\hil)$, and $b\in\mf N$,
\begin{eqnarray*}
S(A'\otimes b)&=&\tr{\rho_0^{1/2}(A')^T\rho_0^{1/2}\Phi(b)}=\tr{U(g)\rho_0^{1/2}(A')^T\rho_0^{1/2}U(g)^*(\Phi\circ\beta_g)(b)}\\
&=&\lim_{(F,D)\in\mc F\times\mc F}\tr{\rho_0^{1/2}\rho_D^{-1/2}U(g)\rho_0^{1/2}(A')^T\rho_0^{1/2}U(g)^*\rho_F^{-1/2}\rho_0^{1/2}(\Phi\circ\beta_g)(b)}\\
&=&\lim_{(F,D)\in\mc F\times\mc F}\tr{\rho_0^{1/2}\big(\rho_F^{-1/2}\ovl{U(g)}\rho_0^{1/2}A'\rho_0^{1/2}\ovl{U(g)}^*\rho_D^{-1/2}\big)^T\rho_0^{1/2}(\Phi\circ\beta_g)(b)}\\
&=&\lim_{(F,D)\in\mc F\times\mc F}S\big(\rho_F^{-1/2}\ovl{U(g)}\rho_0^{1/2}A'\rho_0^{1/2}\ovl{U(g)}^*\rho_D^{-1/2}\otimes\beta_g(b)\big).
\end{eqnarray*}
Thus, Equation \eqref{eq:vaikea1} holds. Note that we use the strong convergence $\rho_0^{1/2}\rho_F^{-1/2}\overset{F\in\mc F}{\rightarrow}\id_\hil$ in the third equality above.

Suppose that Equation \eqref{eq:vaikea1} holds for all $g\in G$, $A'\in\mc L(\hil)$, and $b\in\mf N$, for a state $S\in{\bf NS}_{\rho_0}\big(\mc L(\hil)\vN\mf N\big)$, let the channel $\Phi\in{\bf NCH}\big(\mc L(\hil),\mf N\big)$ be such that $S=S_\Omega^\Phi$. We have, for all $A'\in\mc L(\hil)$, $b\in\mf N$, and $g\in G$,
\begin{eqnarray*}
\tr{\rho_0^{1/2}A'\rho_0^{1/2}(\Phi\circ\beta_g)(b)}&=&S\big((A')^T\otimes\beta_g(b)\big)\\
&=&\lim_{(F,D)\in\mc F\times\mc F}S\big(\rho_F^{-1/2}\ovl{U(g)}^*\rho_0^{1/2}(A')^T\rho_0^{1/2}\ovl{U(g)}\rho_D^{-1/2}\otimes b\big)\\
&=&\lim_{(F,D)\in\mc F\times\mc F}\tr{\rho_0^{1/2}\big(\rho_F^{-1/2}\ovl{U(g)}^*\rho_0^{1/2}(A')^T\rho_0^{1/2}\ovl{U(g)}\rho_D^{-1/2}\big)^T\rho_0^{1/2}\Phi(b)}\\
&=&\lim_{(F,D)\in\mc F\times\mc F}\tr{\rho_0^{1/2}\rho_D^{-1/2}U(g)^*\rho_0^{1/2}A'\rho_0^{1/2}U(g)\rho_F^{-1/2}\rho_0^{1/2}\Phi(b)}\\
&=&\tr{U(g)^*\rho_0^{1/2}A'\rho_0^{1/2}U(g)\Phi(b)}=\tr{\rho_0^{1/2}A'\rho_0^{1/2}(\alpha_g\circ\Phi)(b)},
\end{eqnarray*}
implying that $\Phi\in{\bf NCH}_\alpha^\beta$. We have used the strong convergence $\rho_0^{1/2}\rho_F^{-1/2}\overset{F\in\mc F}{\rightarrow}\id_\hil$ in the second-to-last equality.
\end{proof}

The limit \eqref{eq:vaikea1} in the above result makes this covariance conditions somewhat difficult check. However, if we have access to a faithful state which is invariant under the action $g\mapsto\alpha_g$, we may formulate a very simple necessary and sufficient condition for covariance. This will be the topic of the following section.

Let us retain the assumptions of Proposition \ref{prop:vaikea}. If there is a Hilbert space $\mc K$ such that $\mf N=\mc L(\mc K)$ whence there is a projective unitary representation $V:G\to\mc U(\mc K)$ such that $\beta_g(B)=V(g)BV(g)^*$ for all $g\in G$ and $B\in\mc L(\mc K)$, $\Phi\in{\bf NCH}(\hil,\mc K)$ is $(\alpha,\beta)$-covariant if and only if
\begin{equation}\label{eq:CovChanChar}
\big(\ovl{U(g)}\otimes V(g)\big)^*S_\Omega^\Phi\big(\ovl{U(g)}\otimes V(g)\big)=S_{\big(U(g)\otimes\ovl{U(g)}\big)\Omega}
^\Phi,\qquad g\in G.
\end{equation}
For $S:=S_\Omega^\Phi$, the condition \eqref{eq:vaikea1} becomes
$$
\lim_{(F,D)\in\mc F\times\mc F}\big(\rho_F^{-1/2}\ovl{U(g)}\rho_0^{1/2}\otimes V(g)\big)^*S\big(\rho_D^{-1/2}\ovl{U(g)}\rho_0^{1/2}\otimes V(g)\big)=S
$$
for all $g\in G$ where the limit is taken with respect to the $\sigma$-weak topology in $\mc T(\hil\otimes\mc K)$.

If $\dim{\hil}=:d<\infty$, we may choose $\rho_0=d^{-1}\id_\hil$ which is obviously invariant under all automorphisms. Let us fix an orthonormal basis $\{|n\>\}_{n=1}^d$ and define $\Omega=d^{-1/2}\sum_{n=1}^d|n,n\>$; we denote $|m,n\>:=|m\>\otimes|n\>$ for all $m,\,n=1,\ldots,\,d$. It follows that, for all $\Phi\in{\bf NCH}(\hil,\mc K)$,
$$
S^\Phi:=S_\Omega^\Phi=\frac{1}{d}\sum_{m,n=1}^d|m\>\<n|\otimes\Phi_*(|m\>\<n|),
$$
and $\Phi\in{\bf NCH}_\alpha^\beta$ if and only if
$$
\big(\ovl{U(g)}\otimes V(g)\big)S^\Phi=S^\Phi\big(\ovl{U(g)}\otimes V(g)\big),\qquad g\in G.
$$

\section{Examples involving invariant faithful states}\label{sec:invariant}

The results of the previous section can be greatly streamlined if there is a faithful normal state on the Heisenberg output algebra which is invariant with respect to the action $G\ni g\mapsto\alpha_g$. In this section we concentrate on this special case and also study a couple of examples (the modular automorphism group and phase shifts) of this in two subsections.

We again concentrate on the case where $\mf M$ and $\mf N$ are von Neumann algebras and $\mf M$ is, additionally, $\sigma$-finite and injective. We keep the earlier notations associated with a GNS-construction $(\mf H,\Omega)$ for a faithful state $\rho_0\in{\bf NS}_1(\mf M)$ with a cyclic and separating vector $\Omega$ for $(\mf H,\mf M)$ fixed. We let $G$ be a group and $G\ni g\mapsto\alpha_g\in{\rm Aut}(\mf M)$ and $G\ni g\mapsto\beta_g\in{\rm Aut}(\mf N)$ be actions. Characterizing covariant channels becomes particularly simple when the trajectory $G\ni g\mapsto\rho_0\circ\alpha_g\in{\bf NS}_1(\mf M)$ shrinks into a singleton, i.e.,\ $\rho_0$ is $\alpha$-invariant; $\rho_0\circ\alpha_g=\rho_0$ for all $g\in G$. Since all the automorphisms $\alpha_g$, $g\in G$, belong now to the stabilizing subgroup $A_{\rho_0}$ of $\rho_0$ within ${\rm Aut}(\mf M)$, we have $U_{\alpha_g}\Omega=\Omega$ for all $g\in G$. This means that $\Phi\in{\bf NCH}(\mf M,\mf N)$ is $(\alpha,\beta)$-covariant if and only if $S_\Omega^\Phi\circ(\alpha_g\otimes\beta_g)=S_{U_{\alpha_g}^*\Omega}^\Phi=S_\Omega^\Phi$ for all $g\in G$. On the other hand, in this invariant case, the one-parameter group $g\mapsto\gamma_g$ of Theorem \ref{theor:covCJ} coincides with $g\mapsto\alpha'_g$, as one easily checks.

One particular instance where the above happens is the case where $G$ is compact and $g\mapsto\alpha_g$ is continuous. This is due to the fact, that, in this case, we may make normal states of $\mf M$ $\alpha$-invariant. Indeed, we may define the map ${\bf S}_1(\mf M)\ni\rho\mapsto\ovl\rho\in{\bf S}_1$, $\ovl\rho(a)=\int_G(\rho\circ\alpha_h)(a)\,d\mu(h)$, $a\in\mf M$, where $\mu$ is the right Haar measure of $G$ with $\mu(G)=1$. This is easily seen to be a positive map. If $\rho$ is faithful, also $\rho\circ\alpha_h$ is faithful for every $h\in G$ and, consequently, also $\ovl\rho$ is faithful. If $\rho$ is normal, using the monotone convergence theorem, one easily shows that $\ovl\rho$ is normal as well. Thus, there exists a faithful state $\rho_0\in{\bf NS}_1(\mf M)$ which is $\alpha$-invariant. When we endow this $\rho_0$ with a GNS-representation $(\mf M,\Omega)$ where $\Omega$ is cyclic and separating for $(\mf H,\mf M)$, channels $\Phi\in{\bf NCH}_\alpha^\beta$ are in one-to-one correspondence with binormal Choi-Jamio\l kowski states $S_\Omega^\Phi\in{\bf S}^{\rm bin}_{\rho_0}(\mf M'\tmin\mf N)$ such that $S_\Omega^\Phi\circ(\alpha'_g\otimes\beta_g)=S_\Omega^\Phi$ for every $g\in G$. We may combine the above observations into the following result.

\begin{theorem}\label{theor:invariant}
Suppose that $\mf M$ and $\mf N$ are von Neumann algebras where $\mf M$ is, additionally, $\sigma$-finite and injective, $G$ is a group, and $G\ni g\mapsto\alpha_g\in{\rm Aut}(\mf M)$ and $G\ni g\mapsto\beta_g\in{\rm Aut}(\mf N)$ are actions. If there is a faithful state $\rho_0\in{\bf NS}_1(\mf M)$ such that $\rho_0\circ\alpha_g=\rho_0$ for all $g\in G$, the channels $\Phi\in{\bf NCH}_\alpha^\beta$ are in one-to-one correspondence with binormal states $S\in{\bf S}^{\rm bin}_{\rho_0}(\mf M'\tmin\mf N)$ such that $S\circ(\alpha'_g\otimes\beta_g)=S$. This correspondence is mediated by the binormal Choi-Jamio\l kowsky isomorphism ${\bf NCH}(\mf M,\mf N)\ni\Phi\mapsto S_\Omega^\Phi\in{\bf S}^{\rm bin}_{\rho_0}(\mf M'\tmin\mf N)$ restricted onto ${\bf NCH}_\alpha^\beta$ and defined by any GNS-construction $(\mf H,\Omega)$ for $\rho_0$ where $\Omega$ is cyclic and separating for $(\mf H,\mf M)$. Especially this happens when $G$ is compact and $g\mapsto\alpha_g$ is continuous.
\end{theorem}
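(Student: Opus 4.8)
The plan is to obtain the theorem as a direct specialization of the covariance characterization in Theorem~\ref{theor:covCJ}, combined with the bijectivity of the binormal Choi-Jamio\l kowski isomorphism from Theorem~\ref{theor:CJisomorphism}. The $\alpha$-invariance hypothesis $\rho_0\circ\alpha_g=\rho_0$ says precisely that every $\alpha_g$ lies in the stabilizer $A_{\rho_0}$, and by the property $U_\alpha^*\Omega=\Omega$ recorded for all $\alpha\in A_{\rho_0}$, the Araki representation satisfies $U_{\alpha_g}^*\Omega=\Omega$ for every $g\in G$. Substituting this into the right-hand side $S_{U_{\alpha_g}^*\Omega}^\Phi$ of condition~\eqref{eq:covCJ} collapses it to $S_\Omega^\Phi$, so Theorem~\ref{theor:covCJ} becomes the statement that $\Phi\in{\bf NCH}_\alpha^\beta$ if and only if $S_\Omega^\Phi\circ(\alpha'_g\otimes\beta_g)=S_\Omega^\Phi$ for all $g\in G$.

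Since $\Phi\mapsto S_\Omega^\Phi$ is already an affine bijection of ${\bf NCH}(\mf M,\mf N)$ onto ${\bf S}^{\rm bin}_{\rho_0}(\mf M'\tmin\mf N)$, I would then simply restrict it to the covariant channels: its image is exactly the set of those $S\in{\bf S}^{\rm bin}_{\rho_0}(\mf M'\tmin\mf N)$ fixed by every $\alpha'_g\otimes\beta_g$, which is the asserted correspondence. This disposes of the first assertion for any cyclic and separating GNS-construction of $\rho_0$.

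For the final sentence I must manufacture an invariant faithful normal state out of the compactness of $G$. Starting from an arbitrary faithful $\rho\in{\bf NS}_1(\mf M)$ (available by $\sigma$-finiteness), I would set $\rho_0(a)=\int_G(\rho\circ\alpha_h)(a)\,d\mu(h)$ with $\mu$ the normalized right Haar measure. Invariance is immediate, since $\rho_0\circ\alpha_g(a)=\int_G(\rho\circ\alpha_{hg})(a)\,d\mu(h)=\rho_0(a)$ by right-invariance of $\mu$. Faithfulness follows because continuity of the action makes $h\mapsto(\rho\circ\alpha_h)(a)$ continuous, so that if $a\geq0$ and $\rho_0(a)=0$ then this nonnegative integrand vanishes on the full support of $\mu$, in particular at $h=e$, whence $\rho(a)=0$ and $a=0$.

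The one step that genuinely needs justification is the normality of $\rho_0$, and this is where I expect the only real (though minor) obstacle: normality is an order-continuity condition rather than a mere norm condition. I would dispose of it through the predual. Point-predual-norm continuity of $g\mapsto\alpha_g$ makes $h\mapsto\rho\circ\alpha_h$ a norm-continuous map of the compact group $G$ into the Banach space $\mf M_*$, so its Bochner integral exists and again lies in the norm-closed subspace $\mf M_*$; evaluating this integral against each $a\in\mf M$ reproduces $\rho_0$, which is therefore normal. Alternatively, one verifies $\rho_0(\sup_\lambda a_\lambda)=\sup_\lambda\rho_0(a_\lambda)$ directly by the monotone convergence theorem applied under the integral sign. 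Feeding this $\rho_0$, together with any of its cyclic and separating GNS-constructions, into the first part then yields the final claim.
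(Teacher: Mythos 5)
Your proposal is correct and follows essentially the same route as the paper: specialize Theorem~\ref{theor:covCJ} using the fact that $\alpha$-invariance of $\rho_0$ forces $U_{\alpha_g}^*\Omega=\Omega$, so that condition~\eqref{eq:covCJ} collapses to $S_\Omega^\Phi\circ(\alpha'_g\otimes\beta_g)=S_\Omega^\Phi$, and then obtain the compact case by Haar-averaging a faithful normal state. The only cosmetic difference is that you justify normality of the averaged state via a Bochner integral in the predual (a clean alternative), where the paper appeals to the monotone convergence theorem.
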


The analogue of this theorem naturally applies in the case when $\mf M$ is such that the state $\tilde{\rho_0}\in{\bf S}(\mf M'\tmin\mf M)$ introduced after the proof of Theorem \ref{theor:CJisomorphism} extends into $\hat{\rho_0}\in{\bf NS}(\mf M'\vN\mf M)$. In this case, when $\rho_0\circ\alpha_g=\rho_0$ for all $g\in G$, $(\alpha,\beta)$-covariant channels are in one-to-one correspondence with normal states $S\in{\bf NS}_{\rho_0}(\mf M'\vN\mf N)$ such that $S\circ(\alpha_g\otimes\beta_g)=S$ for all $g\in G$.

\subsection{An example: the modular automorphism group}\label{subsec:modulargroup}

Suppose that $\mf M$ is an injective $\sigma$-finite von Neumann algebra and $\mf N$ is another von Neumann algebra. Suppose that $\rho_0$ is a faithful normal state of $\mf M$ with a cyclic and separating vector $\Omega\in\mf H$ and retain the notations of the previous sections. We now assume that the symmetry group is the additive real line $\R$ and $\alpha_t(a)=\Delta^{it}a\Delta^{-it}$, $t\in\R$, $a\in\mf M$, i.e.,\ $\R\ni t\mapsto\alpha_t\in{\rm Aut}(\mf M)$ is the {\it modular automorphism group}. This action is associated with the equilibrium dynamics of a quantum system. Since $\rho_0$ is $\alpha$-invariant, it follows that, for any action $\R\ni t\mapsto\beta_t\in{\rm Aut}(\mf N)$ and any $\Phi\in{\bf NCH}_\alpha^\beta$, $S_\Omega^\Phi\circ(\alpha'_t\otimes\beta_t)=S_\Omega^\Phi$ for all $t\in\R$. Note that $\alpha'_t(a')=\Delta^{it}a'\Delta^{-it}$ for all $t\in\R$ and $a'\in\mf M'$.

Let us look at the `fully quantum' case where $\mf M=\mc L(\hil)$ and $\mf N=\mc L(\mc K)$, where $\hil$ and $\mc K$ are both separable Hilbert spaces. Now, $\rho_0\in\mc S(\hil)$ is faithful and $\alpha_t(A)=\rho_0^{it}A\rho_0^{-it}$, $t\in\R$, $A\in\mc L(\hil)$. We assume that $t\mapsto\beta_t$ is continuous, so that, as all automorphisms on a type-I factor are inner, there is a strongly continuous unitary representation $V:\R\to\mc U(\mc K)$ such that $\beta_t(B)=V(t)BV(t)^*$, $t\in\R$, $B\in\mc L(\mc K)$. According to the SNAG-theorem (Segal-Na\u{\i}mark-Ambrose-Godement theorem), there is a spectral measure $\ms P:{\bf Leb}(\R)\to\mc L(\mc K)$ on the Lebesgue-$\sigma$-algebra of $\R$ such that $V(t)=\int_\R e^{iht}\,d\ms P(h)$, $t\in\R$. Equivalently, there is a self-adjoint $H:\mc D(H)\to\mc K$ such that $V(t)=e^{itH}$, $t\in\R$. We have $\Phi\in{\bf NCH}_\alpha^\beta$ if and only if
\begin{equation}\label{eq:modularinvariant}
(\rho_0^{-it}\otimes e^{itH})S_\Omega^\Phi=S_\Omega^\Phi(\rho_0^{-it}\otimes e^{itH}),\qquad {\rm for\ all}\ t\in\R.
\end{equation}

Suppose now that also $t\mapsto\beta_t$ is a modular group. Thus, there is a faithful state $\sigma_0\in\mc S(\mc K)$ such that $\beta_t(B)=\sigma_0^{it}B\sigma_0^{-it}$, $t\in\R$, $B\in\mc L(\mc K)$. It follows that $\Phi\in{\bf NCH}_\alpha^\beta$ if and only if
$$
(\rho_0^{-1}\otimes\sigma_0)S_\Omega^\Phi=S_\Omega^\Phi(\rho_0^{-1}\otimes\sigma_0).
$$

\subsection{An example: phase-shift-covariant channels}\label{subsec:phaseshift}

We let $\hil$ be a Hilbert space with an orthonormal basis $\{|n\>\}_{n=0}^\infty$. We treat $[0,2\pi)$ as an additive cyclic group. Define the unitary representation $U:[0,2\pi)\to\mc U(\hil)$
$$
U(\tj)=\sum_{n=0}^\infty e^{in\tj}|n\>\<n|,\qquad\tj\in[0,2\pi).
$$
The Hilbert space $\hil$ describes a harmonic oscillator where $U$ mediates the phase shifts. We are interested in channels $\Phi\in{\bf NCH}(\hil,\hil)$ which are symmetric under the phase shifts, i.e.,\ $\Phi\big(U(\tj)BU(\tj)^*\big)=U(\tj)\Phi(B)U(\tj)^*$ for all $\tj\in[0,2\pi)$ and $B\in\mc L(\hil)$. We denote the set of these channels by ${\bf NCH}_U$ and call channels $\Phi\in{\bf NCH}_U$ as {\it $U$-covariant}. Because the group $[0,2\pi)$ is compact, there are invariant faithful states which are diagonalized in the basis $\{|n\>\}_{n=0}^\infty$. Let us define, for all $A\in\mc L(\hil)$, $\ovl A\in\mc L(\hil)$ through $\<m|\ovl A|n\>=\ovl{\<m|A|n\>}$ for all $m,\,n=0,\,1,\,2,\ldots$, i.e., $\ovl A=A^{T\,*}$ with the transpose defined with respect to the basis $\{|n\>\}_{n=0}^\infty$. The Choi-Jamio\l kowski states of channels $\Phi\in{\bf NCH}_U$ are those $S\in\mc L(\hil\otimes\hil)$ with a fixed first margin such that
\begin{equation}\label{eq:phasecovCJ}
\big(\ovl{U(\tj)}\otimes U(\tj)\big)S=S\big(\ovl{U(\tj)}\otimes U(\tj)\big),\qquad\tj\in[0,2\pi).
\end{equation}
We go on to characterizing ${\bf NCH}_U$ using Theorem \ref{theor:invariant}.

\begin{proposition}
A channel $\Phi\in{\bf NCH}(\hil,\hil)$ is $U$-covariant if and only if there are $\tau_{l,j,m}\in\C$, $l,\,j,\,m=0,\,1,\,2,\ldots$, such that $\sum_{l,j=0}^\infty|\tau_{l,j,m}|^2=1$ for all $m=0,\,1,\,2,\ldots$ and, upon defining $K_{l,j}\in\mc L(\hil)$, $K_{l,j}|m\>=\tau_{l,j,m}|m+l\>$, for all $l,\,j,\,m=0,\,1,\,2,\ldots$,
\begin{equation}\label{eq:Ucovchar}
\Phi(B)=\sum_{l,j=0}^\infty K_{l,j}^*BK_{l,j},\qquad B\in\mc L(\hil).
\end{equation}
Moreover, when $\tau_{l,j,m}\in\C$, $l,\,j,\,m=0,\,1,\,2,\ldots$ are such that $\sum_{l,j=0}^\infty|\tau_{l,j,m}|^2=1$ for all $m=0,\,1,\,2,\ldots$, defining $K_{l,j}\in\mc L(\hil)$, $K_{l,j}|m\>=\tau_{l,j,m}|m+l\>$, for all $l,\,j,\,m=0,\,1,\,2,\ldots$, Equation \eqref{eq:Ucovchar} sets up a channel $\Phi\in{\bf NCH}_U$.
\end{proposition}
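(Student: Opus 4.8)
The plan is to read the covariance off the Choi--Jamio\l kowski state and then convert it into Kraus operators via Proposition \ref{prop:KrausSp}. Since $[0,2\pi)$ is compact and the phase action is continuous, Theorem \ref{theor:invariant} applies with the invariant faithful state $\rho_0=\sum_n t_n\kb{n}{n}$ ($t_n>0$, $\sum_n t_n=1$) and the GNS-vector $\Omega=\sum_n\sqrt{t_n}\ket{n}\otimes\ket{n}$, so that $U$-covariant channels correspond exactly to operators $S=S_\Omega^\Phi\in\mc L(\hil\otimes\hil)$ with first margin $\rho_0$ satisfying the commutation relation \eqref{eq:phasecovCJ}. Everything then reduces to translating that relation into a block structure of $S$ and diagonalizing.

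For the forward implication I would first note $\ovl{U(\tj)}\otimes U(\tj)=\sum_{m,n}e^{i(n-m)\tj}\kb{m,n}{m,n}$, so that \eqref{eq:phasecovCJ} says precisely $\<m,n|S|m',n'\>=0$ whenever $n-m\neq n'-m'$. Thus $S$ is block diagonal with blocks labelled by the shift $l=n-m$, the $l$-block living on $\ovl{\rm span}\{\ket{m,m+l}\}$. Diagonalizing each block and collecting the (nonzero) eigenvectors yields an orthogonal decomposition $S=\sum_{l,j}\kb{w_{l,j}}{w_{l,j}}$ with $w_{l,j}=\sum_m c^{(l,j)}_m\ket{m,m+l}$. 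Feeding this into Proposition \ref{prop:KrausSp}(b) gives a minimal Kraus family with $w_{l,j}=(\id_\hil\otimes K_{l,j})\Omega$; matching coefficients forces $K_{l,j}\ket{m}=\tau_{l,j,m}\ket{m+l}$ with $\tau_{l,j,m}=c^{(l,j)}_m/\sqrt{t_m}$, so each Kraus operator is a pure shift. Unitality together with $K_{l,j}^*K_{l,j}\ket{m}=|\tau_{l,j,m}|^2\ket{m}$ then yields $\sum_{l,j}|\tau_{l,j,m}|^2=1$ for every $m$.

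The converse is a direct verification. Given such $\tau_{l,j,m}$, each $K_{l,j}$ is a contraction, the normalization gives $\sum_{l,j}K_{l,j}^*K_{l,j}=\id_\hil$ strongly, and the right-hand side of \eqref{eq:Ucovchar} converges weakly (absolutely, by Cauchy--Schwarz and $\sum_{l,j}\|K_{l,j}\varphi\|^2=\|\varphi\|^2$) to a unital completely positive map; normality and well-definedness follow from the Stinespring/Kraus correspondence recorded before Lemma \ref{lemma:StinespringKraus}. Covariance is then immediate from the intertwining relation $U(\tj)K_{l,j}U(\tj)^*=e^{il\tj}K_{l,j}$, checked on basis vectors: substituting into $\Phi\big(U(\tj)BU(\tj)^*\big)=\sum_{l,j}K_{l,j}^*U(\tj)BU(\tj)^*K_{l,j}$, the factors $e^{\pm il\tj}$ cancel and leave $U(\tj)\Phi(B)U(\tj)^*$.

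The step I expect to need the most care is the bookkeeping of the shift index $l$. The block decomposition a priori produces shifts ranging over all of $\Z$, and negative shifts correspond to lowering Kraus operators --- as already happens for the phase-covariant amplitude-damping channel, whose Heisenberg Kraus operators lower the excitation number --- so one must check whether the present normalization genuinely confines $l$ to $\{0,1,2,\ldots\}$ or whether the statement is to be read with $l$ ranging over $\Z$ (with $\ket{m+l}:=0$ for $m+l<0$). The one remaining technical loose end is controlling the possibly infinite family $\{K_{l,j}\}$: the square-summability $\sum_{l,j}|\tau_{l,j,m}|^2=1$ is exactly what is needed to make the Stinespring isometry $\varphi\mapsto\sum_{l,j}K_{l,j}\varphi\otimes\eta_{l,j}$ bounded and hence the whole construction normal.
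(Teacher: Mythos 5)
Your argument follows the same route as the paper's proof: pass to the Choi state via Theorem \ref{theor:invariant}, read the covariance condition \eqref{eq:phasecovCJ} as block-diagonality of $S$ with respect to the shift $l=n-m$, decompose each block into orthogonal rank-one pieces, and pull these back to shift-type Kraus operators. (The paper factors $S_l=T_l^*T_l$ and verifies \eqref{eq:Ucovchar} directly from Equation \eqref{eq:recovery2} instead of invoking Proposition \ref{prop:KrausSp}(b), but that is an inessential difference.) The converse verification is likewise identical.

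The caveat you raise about the range of $l$ is not a loose end in your write-up; it is an actual error in the proposition and in the paper's proof. The operator $\sum_{l=0}^\infty e^{il\tj}\big(\sum_{m=0}^\infty\kb{m,m+l}{m,m+l}\big)$ that the paper identifies with $\ovl{U(\tj)}\otimes U(\tj)$ is not equal to it: at $\tj=0$ it is the projection onto $\ovl{{\rm span}}\{\ket{m,n}\,:\,n\geq m\}$ rather than $\id_{\hil\otimes\hil}$. The correct decomposition is indexed by $l\in\Z$, the $l$-block being spanned by $\ket{m,m+l}$ with $m\geq\max(0,-l)$, exactly as you suspect. Your amplitude-damping example settles that the negative blocks genuinely occur: with $A_0=\kb{0}{0}+\sqrt{1-p}\sum_{m\geq1}\kb{m}{m}$ and $A_1=\sqrt{p}\sum_{m\geq1}\kb{m-1}{m}$ one obtains a unital, normal, $U$-covariant channel $\Phi(B)=A_0^*BA_0+A_1^*BA_1$ whose Choi state has a nonzero component in the $l=-1$ block; and since every Kraus family for $\Phi$ consists of linear combinations of $A_0$ and $A_1$, no Kraus decomposition by pure raising shifts exists. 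So the statement is false as printed and must be read with $l$ ranging over $\Z$, with $\ket{m+l}:=0$ (hence $\tau_{l,j,m}=0$) when $m+l<0$, and with the normalization $\sum_{l,j}|\tau_{l,j,m}|^2=1$ taken over this larger index set. With that amendment your proof goes through verbatim.
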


\begin{proof}
Let $\Phi\in{\bf NCH}_U$ and $S:=S_\Omega^\Phi$ with $\Omega=\sum_{n=0}^\infty \sqrt{t_n}|n,n\>$ where $t_n>0$ for all $n=0,\,1,\,2,\ldots$ and $\sum_{n=0}^\infty t_n=1$; this vector provides a GNS-representation for $\rho_0\in\mc S(\hil)$, $\rho_0=\sum_{n=0}^\infty t_n|n\>\<n|$, and it is cyclic and separating for $\big(\hil\otimes\hil,\mc L(\hil)\otimes\C\id_\hil\big)$. One sees immediately that $\big(U(\tj)\otimes\ovl{U(\tj)}\big)^*\Omega=\Omega$ for all $\tj\in[0,2\pi)$ as it should be. The $U$-covariance of $\Phi$ is equivalent with Equation \eqref{eq:phasecovCJ}. In order to characterize $S$, we have to decompose $\tj\mapsto\ovl{U(\tj)}\otimes U(\tj)$ into irreducibles. It is quite clear that the correct decomposition is
$$
\ovl{U(\tj)}\otimes U(\tj)=\sum_{l=0}^\infty e^{il\tj}\bigg(\sum_{m=0}^\infty|m,m+l\>\<m,m+l|\bigg),\qquad\tj\in[0,2\pi).
$$
Denote, for each $l=0,\,1,\,2,\ldots$, the Hilbert space spanned by $|m,m+l\>$, $m=0,\,1,\,2,\ldots$, by $\mc K_l$. Hence, $\hil\otimes\hil=\bigoplus_{l=0}^\infty\mc K_l$ and $S=\bigoplus_{l=0}^\infty S_l$ with some positive $S_l\in\mc T(\mc K_l)$, $l=0,\,1,\,2,\ldots$ It follows after a simple calculation that
$$
\rho_0={\rm tr}_2[S]=\sum_{m=0}^\infty\sum_{l=0}^\infty\<m,m+l|S_l|m,m+l\>|m\>\<m|,
$$
implying that $t_m=\sum_{l=0}^\infty\<m,m+l|S_l|m,m+l\>$ for all $m=0,\,1,\,2,\ldots$. Since, for every $l=0,\,1,\,2,\ldots$, $S_l$ are positive, it follows that there is a Hilbert-Schmidt operator $T_l$ on $\mc K_l$ such that $S_l=T_l^*T_l$ or, upon denoting $\tilde{\tau}_{l,j,m}:=\ovl{\<j,j+l|T_l|n,n+l\>}$ for all $j,\,m=0,\,1,\,2,\ldots$, $\<n,n+l|S_l|m,m+l\>=\sum_{j=0}^\infty\ovl{\tilde{\tau}_{l,j,m}}\tilde{\tau}_{l,j,n}$, $m,\,n=0,\,1,\,2,\ldots$. Define $\tau_{l,j,m}=t_m^{-1/2}\tilde{\tau}_{l,j,m}$, $l,\,j,\,m=0,\,1,\,2,\ldots$. We have, for all $m=0,\,1,\,2,\ldots$,
$$
\sum_{l,j=0}^\infty|\tau_{l,j,m}|^2=\frac{1}{t_m}\sum_{l,j=0}^\infty\<m,m+l|S_l|m,m+l\>=\frac{t_m}{t_m}=1
$$
and, defining $K_{l,j}\in\mc L(\hil)$, $l,\,j=0,\,1,\,2,\ldots$, as in the claim, we may calculate, using Equation \eqref{eq:recovery2}, for all $m,\,n=0,\,1,\,2,\ldots$,
\begin{eqnarray*}
\<m|\Phi(B)|n\>&=&\frac{1}{\sqrt{t_m t_n}}\tr{S(|m\>\<n|\otimes B)}=\frac{1}{\sqrt{t_m t_n}}\sum_{l,r=0}^\infty\<r,r+l|S_l(|m\>\<n|\otimes B)|r,r+l\>\\
&=&\frac{1}{\sqrt{t_m t_n}}\sum_{l=0}^\infty\<n,n+l|S_l(\id_\hil\otimes B)|m,n+l\>\\
&=&\frac{1}{\sqrt{t_m t_n}}\sum_{l,r=0}^\infty\<n,n+l|S_l|r,r+l\>\<r,r+l|\id_\hil\otimes B|m,n+l\>\\
&=&\frac{1}{\sqrt{t_m t_n}}\sum_{l=0}^\infty\<n,n+l|S_l|m,m+l\>\<m+l|B|n+l\>\\
&=&\frac{1}{\sqrt{t_m t_n}}\sum_{l,j=0}^\infty\ovl{\tilde{\tau}_{l,j,m}}\tilde{\tau}_{l,j,n}\<m+l|B|n+l\>\\
&=&\sum_{l,j=0}^\infty\ovl{\tau_{l,j,m}}\tau_{l,j,n}\<m+l|B|n+l\>=\sum_{l,j=0}^\infty\<m|K_{l,j}^*BK_{l,j}|n\>.
\end{eqnarray*}

Suppose then that $\tau_{l,j,m}\in\C$, $l,\,j,\,m=0,\,1,\,2,\ldots$ are such that, for all $m=0,\,1,\,2,\ldots$, $\sum_{l,j=0}^\infty|\tau_{l,j,m}|^2=1$, and define the operators $K_{l,j}$, $l,\,j=0,\,1,\,2,\ldots$, as in the claim. Let us show that Equation \eqref{eq:Ucovchar} truly defines $\Phi\in{\bf NCH}_U$. We have, for any $m,\,n=0,\,1,\,2,\ldots$,
$$
\<m|\Phi(\id_\hil)|n\>=\sum_{l,j=0}^\infty\<m|K_{l,j}^*K_{l,j}|n\>=\sum_{l,j=0}^\infty\ovl{\tau_{l,j,m}}\tau_{l,j,n}\<m+l|n+l\>=\<m|n\>\sum_{l,j=0}^\infty|\tau_{l,j,m}|^2=\<m|n\>,
$$
implying that $\Phi$ is unital. Clearly, $\Phi$ is normal and completely positive (since we have its Kraus decomposition). Let $m,\,n=0,\,1,\,2,\ldots$ and $\tj\in[0,2\pi)$. We have, for all $B\in\mc L(\hil)$,
\begin{eqnarray*}
\<m|\Phi\big(U(\tj)BU(\tj)^*\big)|n\>&=&\sum_{l,j=0}^\infty\ovl{\tau_{l,j,m}}\tau_{l,j,n}\<m+l|U(\tj)BU(\tj)^*|n+l\>\\
&=&e^{i(m-n)\tj}\sum_{l,j=0}^\infty\ovl{\tau_{l,j,m}}\tau_{l,j,n}\<m+l|B|n+l\>\\
&=&e^{i(m-n)\tj}\<m|\Phi(B)|n\>=\<m|U(\tj)\Phi(B)U(\tj)^*|n\>,
\end{eqnarray*}
implying that $\Phi$ is $U$-covariant. This finalizes the proof.
\end{proof}

In particular, setting $\tau_{0,0,m}=e^{im\tj_0}$ for some fixed $\tj_0\in[0,2\pi)$ and all $m=0,\,1,\,2,\ldots$ and $\tau_{l,j,m}=0$ for all $m=0,\,1,\,2,\ldots$ whenever $l,\,j\neq0$, the Equation \eqref{eq:Ucovchar} defines the channel $\Phi_{\tj_0}\in{\bf NCH}_U$, $\Phi_{\tj_0}(B)=U(\tj_0)^*BU(\tj_0)$, $B\in\mc L(\hil)$.

\section{Faithful states with proper orbits}

We finally discuss the case where there is no invariant faithful state available for us and how we can simplify the general problem using the results of the preceding section (Theorem \ref{theor:invariant}). We concentrate on the case Euclidean symmetries of rigid motions and lay some groundwork for an exhaustive determination of Euclidean-covariant channels. This full characterization is not yet reached in this work, but we illustrate what steps are to be taken for this goal.

We again let $\mf M$ and $\mf N$ be von Neumann algebras where $\mf M$ is, additionally, $\sigma$-finite and injective. Pick a faithful state $\rho_0\in{\bf NS}_1(\mf M)$ and let $(\mf H,\Omega)$ be a GNS-construction for $\rho_0$ where $\Omega$ is cyclic and separating for $(\mf H,\mf M)$. We define the modular structures and ${\rm Aut}(\mf M)\ni\alpha\mapsto U_\alpha\in\mc U(\mf H)$ in the same way as earlier. We also assume that $G$ is a locally compact $\sigma$-compact group and that $G\ni g\mapsto\alpha_g\in{\rm Aut}(\mf M)$ and $G\ni g\mapsto\beta_g\in{\rm Aut}(\mf N)$ are actions where the first one is continuous.

Denote the subgroup of those $h\in G$ leaving $\rho_0$ invariant by $H$, i.e.,\ using our earlier observations due to the definition of the representation ${\rm Aut}(\mf M)\ni\alpha\mapsto U_\alpha\in\mc U(\mf H)$,
$$
H=\{h\in G\,|\,\rho_0\circ\alpha_h=\rho_0\}=\{h\in G\,|\,U_{\alpha_h}^*\Omega=\Omega\}.
$$
As in the case of the total stabilizer group $A_{\rho_0}\leq{\rm Aut}(\mf M)$, also $H$ is closed. Due to the $\sigma$-compactness of $G$, the orbit $\mc O_{\rho_0}:=\{\rho\circ\alpha_{g^{-1}}\,|\,g\in G\}$ is bijectively homeomorphic with the left coset space $G/H$. Equivalently, $\mc O_\Omega:=\{U_{\alpha_g}\Omega\,|\,g\in G\}$ is bijectively homeomorphic with $G/H$.

We have seen that the channels which are covariant with respect to a compact group can, in principle, be quite conveniently characterized according to Theorem \ref{theor:invariant} with the binormal Choi-Jamio\l kowsky method. The general case is trickier, as seen in Proposition \ref{prop:vaikea}. However, in most physical cases, the Choi-Jamio\l kowsky method can be used to simplify the task of characterizing covariant channels. To see this, let us retain the assumptions made in this section thus far. Let us assume, additionally that there is a closed normal subgroup $X\leq G$ such that $G=XH$ and $X\cap H=\{e\}$ where $e$ is the neutral element of $G$; a prototypical example is a semidirect product $G=H\times_\delta X$ where $H\ni h\mapsto\delta_h\in{\rm Aut}(X)$ is a homomorphism. Define the actions $H\ni h\mapsto\tilde{\alpha}_h\in{\rm Aut}(\mf M)$, $H\ni h\mapsto\tilde{\beta}_h\in{\rm Aut}(\mf N)$, $X\ni x\mapsto\hat{\alpha}_x\in{\rm Aut}(\mf M)$, and $X\ni x\mapsto\hat{\beta}_x\in{\rm Aut}(\mf N)$ where $\tilde{\alpha}_h=\alpha_h$ and $\tilde{\beta}_h=\beta_h$ for all $h\in H$ and $\hat{\alpha}_x=\alpha_x$ and $\hat{\beta}_x=\beta_x$ for all $x\in X$. It is simple to check that ${\bf NCH}_\alpha^\beta={\bf NCH}_{\tilde{\alpha}}^{\tilde{\beta}}\cap{\bf NCH}_{\hat{\alpha}}^{\hat{\beta}}$. Thus $(\alpha,\beta)$-covariant channels are $(\tilde{\alpha},\tilde{\beta})$-covariant channels, which can be characterized according to Theorem \ref{theor:invariant}, and which are, additionally, $(\hat{\alpha},\hat{\beta})$-covariant. The last condition can be studied by using methods of Proposition \ref{prop:vaikea} or, e.g.,\ dilation techniques of \cite{HaPe2017}.

\subsection{An example: Euclidean covariance}\label{subsec:euclidean}

We go on to studying channels which are covariant with respect to the Euclidean group in $\R^3$. We are not yet, however, in the position to thoroughly characterize such channels, but this example illustrates how one can properly choose a faithful state so as to simplify the characterization of these covariant channels. In this example $\mf M=\mc L(\hil)$, $\hil=L^2(\R^3)\otimes\C^{2j+1}$ where $L^2(\R^3)$ is the Hilbert space of (equivalence classes of) Lebesgue-square-integrable functions $\fii:\R^3\to\C$, $j\in\{0,1/2,2,3/2,\ldots\}$, and $\hil$ is viewed as a set of vector fields $\fii:\R^3\to\C^{2j+1}$ in the natural way. Also $\mf N=\mc L(\mc K)$ for some Hilbert space $\mc K$.

Denote by $\mf S$ the real linear span of the Pauli matrices $\sigma_n\in\mc L(\C^2)$, $n=1,\,2,\,3$, characterized by $\sigma_l\sigma_m=\sum_{n=1}^3\eps_{l,m,n}\sigma_n$ where $\eps_{l,m,n}$, $l,\,m,\,n=1,\,2,\,3$ is the Levi-Civit\`a symbol. The map $m:\R^3\to\mf S$, $m(x_1,x_2,x_3)=x_1\sigma_1+x_2\sigma_2+x_3\sigma_3$, $(x_1,x_2,x_3)\in\R^3$, is a bijection. We define the covering homomorphism $\delta:SU(2)\to SO(3)$ in the usual way, i.e.,\ $m\big(\delta(U)\vec{x}\big)=Um(\vec{x})U^*$ for all $U\in SU(2)$ and $\vec{x}\in\R^3$. We let $G$ be the covering group of the Euclidean group of rigid motions in $\R^3$, i.e.,\ $G$ is the semidirect product of $SU(2)$ and $\R^3$, the group law given by
$$
(U,\vec{a})(V,\vec{b})=(UV,\vec{a}+\delta(U)\vec{b}),\qquad U,\,V\in SU(2),\quad\vec{a},\,\vec{b}\in\R^3.
$$
Pick the $(2j+1)$-dimensional irreducible representation $D^j$ of $SU(2)$ and define the irreducible unitary representation $Z:G\to\mc U(\hil)$,
$$
\big(Z(U,\vec{a})\fii\big)(\vec{x})=D^j(U)\fii\big(\delta(U)^T(\vec{x}-\vec{a})\big),\qquad(U,\vec{a})\in G,\quad\fii\in\hil,\quad\vec{x}\in\R^3.
$$
The action $g\mapsto\alpha_g$ is now defined through $\alpha_{(U,\vec{a})}(A)=Z(U,\vec{a})AZ(U,\vec{a})^*$, $(U,\vec{a})\in G$, $A\in\mc L(\hil)$. It follows that the Hilbert space $\hil$ describes an irreducible Euclidean-invariant spin-$j$ quantum object. We also fix an action $G\ni g\mapsto\beta_g\in{\rm Aut}\big(\mc L(\mc K)\big)$, meaning that there is a projective unitary representation $V:G\to\mc U(\mc K)$ such that $\beta_g(B)=V(g)BV(g)^*$, $g\in G$, $B\in\mc L(\mc K)$.

According to the beginning of Section \ref{sec:invariant}, there is a faithful state $\rho_0\in\mc S(\hil)$ such that $Z(U,0)\rho_0=\rho_0 Z(U,0)$ for all $U\in SU(2)$. Let $(\hil\otimes\hil,\Omega)$ be a GNS-construction for $\rho_0$ where $\Omega$ is cyclic and separating for $\big(\hil\otimes\hil,\mc L(\hil)\otimes\C\id_\hil\big)$ and $U_{\alpha_g}=Z(g)\otimes\ovl{Z(g)}$. The operation $A\mapsto\ovl A$ is defined with respect to the basis chosen when picking $\Omega$; note that the basis can be chosen so that also $U\mapsto Z(U,0)$ is decomposed into irreducibles in the same basis.. It follows that the orbit $\{Z(g)\rho_0 Z(g)^*\,|\,g\in G\}$ can be identified with $\R^3$. Denote $Z(\id,\vec{a})=:\lambda(\vec{a})$ for all $\vec{a}\in\R^3$. According to Equation \eqref{eq:CovChanChar}, the covariant channels $\Phi\in{\bf NCH}_\alpha^\beta$ are now characterized by the condition
$$
\big(Z(U,\vec{a})\otimes\ovl{Z(U,\vec{a})}\big)^*S_\Omega^\Phi\big(Z(U,\vec{a})\otimes\ovl{Z(U,\vec{a})}\big)=S_{\lambda(\vec{a})\otimes\ovl{\big(\lambda(\vec{a})}\big)\Omega}^\Phi,\qquad(U,\vec{a})\in G.
$$
The $(\alpha,\beta)$ covariance prescribes channels that translate the symmetries described by $V$ into symmetries described by $Z$. Especially, if $V=Z$, The channels $\Phi\in{\bf NCH}_\alpha^\beta={\bf NCH}_\alpha^\alpha$ correspond to transformations of the spin-$j$ object that conserve the Euclidean symmetry.

Let us take a closer look at the simplest spin-0 case, i.e.,\ $j=0$. We may now simplify the representation $Z$ to obtain $Z_0:G_0\to\mc U(\hil)$, where $G_0$ is the semidirect product of $SO(3)$ and $\R^3$ with the group law described by
$$
(Q,\vec{a})(R,\vec{b})=(QR,\vec{a}+Q\vec{b}),\qquad\vec{a},\,\vec{b}\in\R^3,\quad Q,\,R\in SO(3),
$$
$\hil=L^2(\R^3)$, and $\big(Z_0(R,\vec{a})\fii\big)(\vec{x})=\fii\big(R^T(\vec{x}-\vec{a})\big)$, $(R,\vec{a})\in G_0$, $\fii\in\hil$, $\vec{x}\in\R^3$. For simplicity, we assume that $V$ is defined on $G_0$ as well.

Denote by $\mb S^2$ the unit sphere in $\R^3$ and by $L^2(\mb S^2)$ the Hilbert space of (equivalence classes of) functions $\eta:\mb S^2\to\C$ which are square integrable with respect to the natural angular-variable measure $\sin\tj\,d\fii\,d\tj$. The space $L^2(\mb S^2)$ has the orthonormal basis $\{Y_{l,m}\,|\,m=-l,\ldots,\,l,\ l=0,\,1,\,2,\ldots\}$ consisting of the angular harmonic functions,
$$
Y_{l,m}(\fii,\tj)=(-1)^m\sqrt{\frac{2l+1}{4\pi}\frac{(l-|m|)!}{(1+|m|)!}}P_l^{|m|}(\cos\tj)e^{im\fii},\quad(\fii,\tj)\in\mb S^2,
$$
where $P_l^n:[-1,1]\to\R$ are the associated Legendre polynomials,
$$
P_l^n(t)=(-1)^{l+n}\frac{(l+n)!}{(l-n)!}(1-t^2)^{-n/2}2^l l!\frac{d^{l-n}}{dt^{l-n}}\big((1-t^2)^l\big),\quad -1\leq t\leq1.
$$
The subspaces $\mc K_l:={\rm span}\{Y_{l,m}\}_{m=-l}^l$, $l=0,\,1,\,2,\ldots$, are invariant under $SO(3)\ni R\mapsto Z_0(R,0)$ and the restriction $D^l$ of this representation onto the subspace $\mc K_l$ is irreducible for all $l=0,\,1,\,2,\ldots$.

Let $L^2(r^2\,dr)$ be the Hilbert space of (equivalence classes of) functions $f:[0,\infty)\to\C$ such that $\int_0^\infty f(r)r^2\,dr<\infty$ where $dr$ is the restriction of the Lebesgue measure on $[0,\infty)$. It follows that we may decompose $\hil$ and the representation $R\mapsto Z_0(R,0)$ as follows:
\begin{eqnarray*}
\hil&=&\bigoplus_{l=0}^\infty\big(\mc K_l\otimes L^2(r^2\,dr)\big),\\
Z_0(R,0)&=&\bigoplus_{l=0}^\infty\big(D^l(R)\otimes\id_{L^2(r^2\,dr)}\big),\qquad R\in SO(3).
\end{eqnarray*}
Any state $\rho_0\in\mc S(\hil)$ such that $Z_0(R,0)\rho_0=\rho_0Z_0(R,0)$ for all $R\in SO(3)$ are of the form
$$
\rho_0=\bigoplus_{l=0}^\infty\frac{t_l}{2l+1}\id_{\mc K_l}\otimes\sigma_l
$$
for some $t_l\geq0$, $l=0,\,1,\,2,\ldots$, such that $\sum_{l=0}^\infty t_l=1$ and some states $\sigma_l\in\mc S\big(L^2(r^2\,dr)\big)$.

Suppose now that $t_l>0$ and $\sigma_l$ is faithful for all $l=0,\,1,\,2,\ldots$. Suppose that $A\in\mc L(\hil)$, $A\geq0$, and $\tr{\rho_0 A}=0$. Denote, for all $l=0,\,1,\,2,\ldots$, by $P_l$ the orthogonal projection of $\hil$ onto the subspace $\mc K_l\otimes L^2(r^2\,dr)$. It follows that
$$
0=\tr{\rho_0 A}=\sum_{l=0}^\infty\tr{P_l\rho_0P_lA}=\sum_{l=0}^\infty\frac{t_l}{2l+1}\tr{(\id_{\mc K_l}\otimes\sigma_l)P_lAP_l}.
$$
Since $(2l+1)^{-1}\id_{\mc K_l}\otimes\sigma_l$ is faithful for all $l=0,\,1,\,2,\ldots$, as one easily shows, it follows that $P_lAP_l=0$ for every $l=0,\,1,\,2,\ldots$. Suppose now that $l\neq k$ and define $\fii_{\kappa,\lambda}=\kappa\fii_k+\lambda\fii_l\in\hil$ for all $\kappa,\,\lambda\in\C$ where $P_i\fii_i=\fii_i$, $i=k,\,l$. We have $\<\fii_{\kappa,\lambda}|A\fii_{\kappa,\lambda}\>$ for all $\kappa,\,\lambda\in\C$, which equals with
\begin{eqnarray*}
0&\leq&\left(\begin{array}{cc}
\<\fii_k|P_kAP_k\fii_k\>&\<\fii_k|P_kAP_l\fii_l\>\\
\<\fii_l|P_lAP_k\fii_k\>&\<\fii_l|P_lAP_l\fii_l\>
\end{array}\right)=\left(\begin{array}{cc}
0&\<\fii_k|P_kAP_l\fii_l\>\\
\<\fii_l|P_lAP_k\fii_k\>&0
\end{array}\right)\\
&\Leftrightarrow&\<\fii_k|P_kAP_l\fii_l\>=0,
\end{eqnarray*}
implying that, as we vary $\fii_k$ and $\fii_l$, $P_kAP_l=0$. Since this holds for every $k,\,l=0,\,1,\,2,\ldots$, $A=0$. We have shown that the state $\rho_0$ in question is faithful. Particularly, if we fix an orthonormal basis $\{R_n\}_{n=0}^\infty$ of the radial space $L^2(r^2\,dr)$ and numbers $t_l,\,r_n>0$, $l,\,n=0,\,1,\,2,\ldots$, such that $\sum_{l=0}^\infty t_l=1=\sum_{n=0}^\infty r_n$, the state
$$
\rho_0=\sum_{l,n=0}^\infty\sum_{m=-l}^l\frac{t_lr_n}{2l+1}|Y_{l,m}\otimes R_n\>\<Y_{l,m}\otimes R_n|
$$
is faithful and the orbit $\{Z_0(g)\rho_0Z_0(g)^*\,|\,g\in G_0\}$ is homeomorphic with $\R^3$.

The canonical choice for the cyclic and separating vector is now
$$
\Omega=\sum_{l,n=0}^\infty\sum_{m=-l}^l\sqrt{\frac{t_lr_n}{2l+1}}Y_{l,m}\otimes R_n\otimes Y_{l,m}\otimes R_n.
$$
We may define the operation $A\mapsto\ovl A$ with respect to the basis $\{Y_{l,m}\otimes R_n\,|m=0,\ldots,\,l,\ \,l,\,n=0,\,1,\,2,\ldots\}$. Using $\Omega$ in defining the Choi-Jamio\l kowsky isomorphism, one can embark on first characterizing channels covariant with respect to the subgroup $\{(R,0)\,|\,R\in SO(3)\}$ which, according to Theorem \ref{theor:invariant} is equivalent to characterizing the states $S\in\mc S_{\rho_0}(\hil\otimes\mc K)$ which commute with the representation $SO(3)\ni R\mapsto\ovl{Z_0(R,0)}\otimes V(R,0)$. This problem naturally depends on the particular form of $V$. The covariance of the resulting $SO(3)$-covariant channels with respect to $\lambda:\R^3\to\mc U(\hil)$ and $\R^3\ni\vec{a}\mapsto V(\id,\vec{a})\in\mc U(\mc K)$ remains as an additional problem to be solved in order to fully characterize the Euclidean-covariant channels.

\section{Conclusions}

We have established a generalization of the traditional Choi-Jamio\l kowski channel-state dualism which caters for a unital $C^*$-algebra as the Heisenberg input and a $\sigma$-finite injective von Neumann algebra as the Heisenberg output. However, we have concentrated on the case where both input and output are injective von Neumann algebras and the output is, additionally, $\sigma$-finite. We have particularly concentrated on covariant channels and their characterization through their Choi-Jamio\l kowski states. We have seen through examples that the generalized Choi-Jamio\l kwski isomorphism provides effective methods for the study of covariant channels particularly when the symmetry group is compact. In (essentially) the general case, characterizing covariant channels boils down to investigating particular fields of states.

There are plenty of questions regarding covariant channels left for future study: Further research into the case of non-compact symmetry groups remains to be undertaken so that we may successfully tackle the determination of channels covariant under non-compact groups. The first steps taken in Subsection \ref{subsec:euclidean} in characterizing Euclidean-covariant channels seem promising though. In the case of a compact symmetry group $G$, when the input and output algebras are both type-I factors, the problem of characterizing quantum channels $\Phi\in{\bf NCH}(\hil,\mc K)$ such that $\Phi\big(V(g)BV(g)^*\big)=U(g)\Phi(B)U(g)^*$ for all $g\in G$ and $B\in\mc L(\mc K)$, where $U:G\to\mc U(\hil)$ and $V:G\to\mc L(\mc K)$ are some projective unitary representations, is reduced to characterizing states $S\in\mc S(\hil\otimes\mc K)$ with a faithful first margin and $\big(\ovl{U(g)}\otimes V(g)\big)S=S\big(\ovl{U(g)}\otimes V(g)\big)$ for all $g\in G$. Thus the problem of decomposing the tensor product representation $g\mapsto \ovl{U(g)}\otimes V(g)$ into irreducibles (the Clebsch-Gordan problem) becomes crucial. Thus answers to these decomposition problems fully resolve the problem of determining quantum channels covariant with respect to a group that leaves a faithful state invariant.

Many of the problems studied earlier using dilation methods can be examined using the Choi-Jamio\l kowski isomorphism. One such question is {\it incompatibility}: Suppose that $\mc B_1$, $\mc B_2$, and $\mc A$ are unital $C^*$-algebras and $\|\cdot\|_x$ is a cross norm for $\mc B_1\alg\mc B_2$ and denote the $\|\cdot\|_x$-closure of $\mc B_1\alg\mc B_2$ by $\mc B_1\otimes_x\mc B_2$. According to the definitions of \cite{Kuramochi2017}, maps $\Phi_i\in{\bf CH}(\mc A,\mc B_i)$, $i=1,\,2$, are {\it $x$-compatible} if there is a {\it joint channel} $\Psi\in{\bf CH}(\mc A,\mc B_1\otimes_x\mc B_2)$ for $\Phi_1$ and $\Phi_2$, i.e.,\ $\Psi(b_1\otimes 1_{\mc B_2})=\Phi_1(b_1)$ and $\Psi(1_{\mc B_1}\otimes b_2)=\Phi_2(b_2)$ for all $b_i\in\mc B_i$, $i=1,\,2$. Otherwise, the channels $\Phi_1$ and $\Phi_2$ are {\it incompatible}. We may formulate similar compatibility conditions when $\mc A=\mf M$ and $\mc B_i=\mf N_i$, $i=1,\,2$, are von Neumann algebras and we use the von Neumann tensor product $\mf N_1\vN\mf N_2$. Using the Choi-Jamio\l kowski isomorphism, compatibility questions can be studied through reduced states linking compatibility to marginal problems of multipartite states.

\section*{Acknowledgements}

The author would like to thank Dr.\ Juha-Pekka Pellonp\"a\"a, Dr.\ Jukka Kiukas, and Dr.\ Roope Uola for reading earlier versions of the manuscript and for providing their constructive feedback. Also Dr.\ Yui Kuramochi is thanked for pointing out an error in an earlier version of this manuscript. This research has received funding from the National Natural Science
Foundation of China (Grant No. 11875110).

\end{document}